\newtheorem{theorem}{Theorem}
\newtheorem{corollary}{Corollary}
\newtheorem{lemma}{Lemma}
\newtheorem{definition}{Definition}
\theoremstyle{remark}
\newtheorem*{remark}{Remark}
\renewcommand{\norm}[1]{\left\Vert#1\right\Vert}                     
\def\supp{{\mathrm{supp}}}                               
\DeclareMathOperator{\Ex}{\mathbb{E}}           
\DeclareMathOperator{\AND}{AND}
\DeclareMathOperator{\XOR}{XOR}
\DeclareMathOperator{\DISJ}{DISJ}
\DeclareMathOperator{\ent}{H}           
\DeclareMathOperator{\IC}{IC}        
\def\ext{{\mathrm{ext}}}           
\newcommand{\cT}{\mathcal T}
\newcommand{\cX}{\mathcal X}
\newcommand{\cY}{\mathcal Y}
\newcommand{\cZ}{\mathcal Z}
\begin{document}

\title{Trading information complexity for error II: the case of a large error and external information complexity}
\author{Yaqiao Li \thanks{McGill University. \texttt{yaqiao.li@mail.mcgill.ca}.}}
\maketitle

\begin{abstract}
Two problems are studied in this paper. (1) How much external or internal information cost is required to compute a Boolean-valued function with an error at most $1/2-\epsilon$ for a small $\epsilon$? It is shown that information cost of order $\epsilon^2$ is necessary and of order $\epsilon$ is sufficient. (2) How much external information cost can be saved to compute a function with  a small error $\epsilon>0$ comparing to the case when no error is allowed?  It is shown that information cost of order at least $\epsilon$ and at most $h(\sqrt{\epsilon})$ can be saved. Except the $O(h(\sqrt{\epsilon}))$ upper bound, the other three bounds are tight. For distribution $\mu$ that is equally distributed on $(0,0)$ and $(1,1)$, it is shown that $\IC^\ext_\mu(\XOR, \epsilon)=1-2\epsilon$ where $\XOR$ is the two-bit xor function. This equality seems to be the first example of exact information complexity when an error is allowed. 
\end{abstract}

\section{Introduction}

The past two decades has witnessed a successful development of information complexity, often as a tool to lower bound communication complexity defined by Yao \cite{Yao1979}, which is then applied to solve many other algorithm and complexity problems. Earlier application of information theoretical ideas to  communication complexity include \cite{ablayev1996lower,SS}.  One important property of  information complexity is that it subsumes many known lower bounds for communication complexity \cite{ICstrong}. Another, perhaps the key property of  information complexity for its success in a variety of applications is that it enjoys a direct sum theorem: the (internal) information complexity of computing $m$ copies of a problem is $m$ times the information complexity of computing one copy. This property has led to notable applications such as direct sum type theorems for communication complexity  \cite{YaoIC,compress}, polynomial space lower bounds for approximating the $k$th frequency moment in the data stream model  \cite{ds},  better lower bound on randomized decision tree complexity \cite{jayram2003two},  matching exponential lower bound for extension complexity of approximating CLIQUE \cite{twobounds}, and exact communication complexity of the set disjointness function \cite{exactComm}, etc. For such applications, it is critical to understand the information complexity of related functions, even for functions with very small inputs such as $\AND: \{0,1\} \times \{0,1\} \to \{0,1\}$ defined by $\AND(x,y)=1$ if and only if $x=y=1$, and $\XOR: \{0,1\} \times \{0,1\} \to \{0,1\}$ defined by $\XOR(x,y)=1$ if and only if $x \neq y$. In fact, previously mentioned applications \cite{ds, twobounds, exactComm} all involve understanding of various aspects of information complexity of $\AND$. 

Let $\epsilon > 0$ denote an error parameter in computing a problem.  A general theme in algorithms and complexity is to understand how much  computational resource one can save or must use if $\epsilon$ error is allowed in computing a problem. We say an error is small if $\epsilon$ is close to $0$, and large if $\epsilon$ is close to $1/2$. This paper is to understand the information complexity of computing functions when either a small or a large error is allowed. We study the external information complexity and  the internal information complexity that are defined in \cite{compress}. This study has been initiated in \cite{my1}, the present work is the second part.  See Table \ref{table:results} for a classification of the work in \cite{my1} and the present work. Results for the prior-free information complexity defined in \cite{braverman2015interactive} are also given.

\begin{table}[ht!]
\centering
\caption{A summary of results.} \label{table:results}
\begin{tabular}{|c|c|c|}  
\hline
&a small error &a large error \\
\hline
internal information complexity & \cite{my1} & this work \\
\hline
external information complexity  & this work & this work\\
\hline
\end{tabular}
\end{table}

Consider the external information complexity for example.  Let $f:\cX \times \cY \to \{0,1\}$ be a two-party Boolean-valued function, $\mu$ be a probability distribution on $\cX \times \cY$, and $\IC_\mu^\ext(f,\epsilon)$ denote the external information complexity of computing $f$ with point-wise error $\epsilon$ (see definition in Section \ref{sec:def}). Intuitively, this external information complexity is the least amount of information that any $\epsilon$ error algorithm has to reveal about the input. When $\epsilon >0$ is close to $0$, we study the upper and lower bounds for $\IC_\mu^\ext(f,0) - \IC_\mu^\ext(f,\epsilon)$. The techniques in \cite{my1} can be directly applied to show,
\begin{equation}  \label{eq:sample-result-ext-small-err}
\Omega(\epsilon) \le \IC_\mu^\ext(f,0) - \IC_\mu^\ext(f,\epsilon) \le O(h(\sqrt{\epsilon})) 
\end{equation}
where $h(\alpha) = -\alpha \log_2 \alpha - (1-\alpha) \log_2(1-\alpha)$ is the Shannon entropy. The constants in $\Omega(\cdot)$ and $O(\cdot)$ are explicitly given, see Theorem \ref{thm:ext-ub-small-pt-err} and Theorem \ref{thm:ext-trivial-lb-small-pt-err}.
When $\mu$ is a product distribution, we show an $\Omega(h(\epsilon))$ lower bound with an explicit constant in $\Omega(\cdot)$, see Theorem \ref{thm:ext-prod-lb-small-pt-err}.
The proof of Theorem \ref{thm:ext-prod-lb-small-pt-err} is much simpler than the proof for the $\Omega(h(\epsilon))$ lower bound of internal information complexity for arbitrary distributions in \cite[Theorem 3.2]{my1}.

 When $1/2-\epsilon > 0$ is close to $1/2$ (here we view $1/2-\epsilon$ as the error, \emph{not} $\epsilon$ itself), we study the upper and lower bounds for $\IC_\mu^\ext(f,1/2-\epsilon) - \IC_\mu^\ext(f,1/2) = \IC_\mu^\ext(f,1/2-\epsilon)$, since $\IC_\mu^\ext(f,1/2) =0$ for Boolean-valued functions. Under some conditions we show
\begin{equation}  \label{eq:sample-result-ext-large-err}
\Omega(\epsilon^2) \le \IC_\mu^\ext(f,1/2-\epsilon)  \le O(\epsilon)
\end{equation}
where the constant in $\Omega(\cdot)$ and $O(\cdot)$ are explicitly given, see Theorem \ref{thm:large-err-pt-lb}. To show \eqref{eq:sample-result-ext-large-err}, it is convenient to first establish some lower bounds for information costs (see Section \ref{sec:explicit-lb}). These lower bounds bear some similarities with those used in \cite{ds, Hellinger, jayram2003two} where Hellinger distance is used, whereas we use the $L_1$ distance directly combined with the Pinsker inequality. The $L_1$ distance is easy to work with and give tight bounds in Section \ref{sec:large-err}. The lower bounds also enable us to characterize all different types of distributions of inputs (we call \emph{trivial distributions}) under which various information complexity measures vanish (see Section \ref{sec:explicit-lb}). Some of these characterizations have been obtained in \cite{my1}, here we give a uniform treatment of all characterizations, though the technique is essentially the same as in \cite{my1}. One can then establish \eqref{eq:sample-result-ext-large-err} using the lower bounds and characterization of trivial distributions.

The lower bound in \eqref{eq:sample-result-ext-small-err} and both bounds in \eqref{eq:sample-result-ext-large-err} are all tight with respect to the order of $\epsilon$, via existing examples in \cite{my1} and \cite{twobounds}, respectively. We re-examine some of these examples. Let 
$\mu = 
\begin{pmatrix}
1/2 & 0 \\
0 & 1/2
\end{pmatrix}$ 
be a distribution on $\{0,1\} \times \{0,1\}$, i.e., $\mu(0,0)  = \mu(1,1) = 1/2$. Improving an upper bound in \cite{my1}, we show that for every $0 \le \epsilon \le 1/2$, 
\begin{equation}   \label{eq:sample-result-xor}
\IC_\mu^\ext(\XOR,\epsilon)  = 1 - 2\epsilon.
\end{equation}
To the knowledge of the author, this is the \emph{first} non-trivial \emph{exact} information complexity for an explicit function when $\epsilon>0$ (the \emph{exact} information complexity with no error, i.e., $\epsilon=0$, is known for $\AND$ function, see \cite{exactComm}). Furthermore, \eqref{eq:sample-result-xor} holds for \emph{every} $\epsilon$ in $[0,1/2]$. Combining this with \eqref{eq:sample-result-ext-large-err} shows that when $\epsilon>0$ is small,
\begin{equation}  \label{eq:sample-result-xor-pf}
\IC^\ext(\XOR,1/2-\epsilon) = \Theta(\epsilon)
\end{equation}
where $\IC^\ext(\XOR,1/2-\epsilon)$ denotes the prior-free information complexity of $\XOR$ (see definition in Section \ref{sec:def}). In \cite{twobounds}, a protocol is given that computes $\AND$ with point-wise error $1/2-\epsilon$ and external information cost at most $O(\epsilon^2)$. We use Wolfram Mathematica to explicitly compute the external (and internal) information cost of this protocol. Combining this with \eqref{eq:sample-result-ext-large-err} shows that when $\epsilon>0$ is small,
\begin{equation}  \label{eq:sample-result-and-pf}
\IC^\ext(\AND,1/2-\epsilon) = \Theta(\epsilon^2).
\end{equation}
It is not hard to show that $\IC^\ext(\XOR,0) = 2$, while \cite{exactComm} shows that $\IC^\ext(\AND,0) = \log_2 3 < 2$. This exhibits the difference of $\XOR$ and $\AND$ using information complexity (such difference can \emph{not} be seen in the realm of communication complexity, since correctly computing $\XOR$ and $\AND$ both require $2$ bits to be communicated). The results \eqref{eq:sample-result-xor-pf} and \eqref{eq:sample-result-and-pf} extend such difference into the regime where an error is allowed.

Due to the direct sum property of information complexity as mentioned before, these results on information complexity of functions, even functions with small inputs, could be useful in a variety of contexts. For example, before the paper \cite{twobounds}, one barrier to show matching extension complexity for approximating CLIQUE  is because of \eqref{eq:sample-result-and-pf}. It is the author's wish to discover or see more applications.

The rest of the paper is organized as follows. In Section \ref{sec:def} we define the notion of various information complexity measures and prove some simple inequalities. In Section \ref{sec:explicit-lb} we prove lower bounds for information costs and characterize trivial distributions. In Section \ref{sec:large-err} and \ref{sec:ext-small-err} we prove results in trading information complexity for large and small errors. In Section \ref{sec:eg} we study tight examples: $\XOR$ and $\AND$. Lastly,  some open problems are given in Section \ref{sec:prob}.

\medskip

{\noindent \bf Acknowledgement} The author is grateful to Hamed Hatami for valuable discussions, and to Yuval Filmus for pointing out \cite{twobounds}.

\section{Information complexity and some inequalities} \label{sec:def}
We assume familiarity with common information theoretical notions such as Shannon entropy, mutual information, and Kullback-Leibler divergence, all of which can be found in the standard book \cite{cover2012elements}. We will formally define information complexity. More  discussion on communication complexity, information complexity and its applications can be found, e.g., in \cite{Kushilevitz1997Communication}, \cite{braverman2015interactive} and the survey \cite{icm}.

\subsection{Notation from information theory}  \label{sec:estimates}
Let $\ent(X)$ denote the Shannon entropy of a random variable $X$. Sometimes for clarity we write $\ent_\gamma(X)$ to explicitly indicate that the random variable $X$ is distributed according to $\gamma$. Let $I(X;Y)$  denote the mutual information between random variables $X$ and $Y$, and $I(X;Y|Z)$ denote the  mutual information of $X$ and $Y$ conditioned on another random variable $Z$. 
Given two distributions $\mu,\nu$ that are distributed on the same space, the Kullback-Leibler divergence (divergence for short) from $\nu$ to $\mu$ is defined as $D(\mu  \| \nu) = \Ex_{x \sim \mu} \log \frac{\mu(x)}{\nu(x)}$.  Two facts relating mutual information to divergence are: (1) $I(X;Y) = D(p(X,Y)\| p(X)p(Y))$, where $p(X,Y)$ is the joint distribution of $(X,Y)$, and $p(X), p(Y)$ are the distributions of $X$ and $Y$, respectively;  (2) $I(X; Y) = \Ex_Y D(p(X|Y) || p(X))$ where $p(X|Y)$ is the conditional distribution of $X$ conditioned on $Y$.

For the sake of brevity, we often write $xy$ and $XY$ to denote the input $(x,y)$ and $(X,Y)$, respectively.  Let $\Delta(\cX \times \cY)$ denote the set of all probability distributions on $\cX \times \cY$. In this paper, $\mu$ usually denotes a probability distribution (distribution for short) on $\cX \times \cY$, i.e., $\mu \in \Delta(\cX \times \cY)$. Let $\supp\mu$  denote the support of $\mu$.

For every $\epsilon \in [0,1]$, let $h(\epsilon) = -\epsilon \log\epsilon - (1-\epsilon) \log(1-\epsilon)$ denote the \emph{binary entropy}, where here and throughout the paper $\log(\cdot)$ is in base $2$, and $0 \log 0 = 0$.

\subsection{Information complexity and the rectangle property}  \label{sec:defIC}
The two-party communication model was introduced by Yao~\cite{Yao1979} in 1979. In this model there are two players (with unlimited computational power), often called Alice and Bob, who wish to collaboratively compute a function $f\colon \cX \times \cY \to \cZ$.   Alice receives an input $x \in \cX$ and Bob receives an input $y \in \cY$. Neither of them knows the other player's input, and they wish to communicate, by sending binary bits to each other, in accordance with an agreed-upon protocol $\pi$ to  compute $f(x,y)$.

\begin{definition}[The deterministic communication protocol and its transcript, \cite{Kushilevitz1997Communication}]   \label{def:protocol}
A (deterministic) communication protocol $\pi$ over domain $\cX \times \cY$ and range $\cZ$ is a binary tree where each internal node $v$ is labeled either by a function $a_v: \cX \to \{0,1\}$ or by a function $b_v: \cY \to \{0,1\}$, and each leaf is labeled with an element in $\cZ$.

Given an input $(x,y) \in \cX \times \cY$ to a protocol $\pi$, it naturally defines a path in the binary tree from the root to a leaf: at each node $v$, the path goes to left or right according to $a_v(x) =0$ or $a_v(x) = 1$ (resp. $b_v(y)= 0$ or $b_v(y) = 1$). The label of the leaf of this path is the output of $\pi$ on $(x,y)$, denoted by $\pi(x,y)$. The sequence of bits on this path  is called the transcript of $\pi$  on input $(x,y)$, denoted by $\pi_{xy}$.
\end{definition}
Usually we say that Alice owns the nodes labeled by $a_v$, and Bob owns the nodes labeled by $b_v$. When $\pi(x,y) = f(x,y)$ for every $(x,y) \in \cX \times \cY$, the protocol $\pi$ is said to compute $f$ correctly (or simply, compute $f$). Figure \ref{fig:protocol-two-bits} gives a simple protocol that correctly computes $\XOR: \{0,1\} \times \{0,1\} \to \{0,1\}$.  In this example, $a_v(x)  = x$, $b_u(y) = y$ and $b_w(y) = y$, i.e., Alice first sends her private input bit to Bob, and Bob then sends his private input bit to Alice. 

\begin{figure}[h!]   
\begin{center}
\begin{tikzpicture}
\draw [fill] (0,3.2) circle [radius=0.05];
\node [right] at (0,3.2) {$a_v$};

\draw [fill] (2,1.6) circle [radius=0.05];
\node [right] at (2,1.6) {$b_{w}$};

\draw [fill] (-2,1.6) circle [radius=0.05];
\node [right] at (-2,1.6) {$b_{u}$};

\draw [fill] (-1,0) circle [radius=0.05];
\draw [fill] (1,0) circle [radius=0.05];
\draw [fill] (-3,0) circle [radius=0.05];
\draw [fill] (3,0) circle [radius=0.05];

\node [below] at (-3,0) {$0$};
\node [below] at (-1,0) {$1$};
\node [below] at (1,0) {$1$};
\node [below] at (3,0) {$0$};

\node [left] at (-1,2.4) {$0$};
\node [right] at (1,2.4) {$1$};

\node [left] at (-2.5,0.8) {$0$};
\node [right] at (-1.5,0.8) {$1$};

\node [left] at (1.5,0.8) {$0$};
\node [right] at (2.5,0.8) {$1$};

\draw (0,3.2) -- (-2,1.6);
\draw (0,3.2) -- (2,1.6);
\draw (2,1.6) -- (1,0);
\draw (2,1.6) -- (3,0);
\draw (-2,1.6) -- (-1, 0);
\draw (-2,1.6) -- (-3, 0);
\end{tikzpicture}
\end{center}
\caption{A two-bit protocol $\pi$.}
\label{fig:protocol-two-bits}
\end{figure}

Definition  \ref{def:protocol} can be generalized to allow each player to have private access to randomness: Alice has access to a random string $r_A\in R_A$ and Bob has access to a random string $r_B \in R_B$. These two random strings are chosen independently from each other and can have different distributions. 

\begin{definition}[The randomized protocol and its random transcript, \cite{Kushilevitz1997Communication}]   \label{def:protocol-random}
In a randomized protocol, the functions $a_v$ and $b_v$ are random functions. That is,  $a_v: \cX \times R_A \to \{0,1\}$, where $r_A \in R_A$ is Alice's private random string. Similarly for $b_v$.  Let $\pi$ be a randomized  protocol. We use $\Pi(x,y)$ to denote the random output of the protocol $\pi$ on input $(x,y)$, and $\Pi_{xy}$ to denote the random transcript of $\pi$ on input $(x,y)$. 
\end{definition}

Throughout this paper we assume protocols are randomized (a deterministic protocol is a randomized protocols in which the randomness for both Alice and Bob is a point distribution).  The following rectangle property of transcripts is important (see, e.g.,  \cite[Lemma 6.7]{ds}). For completeness we provide a proof. 

\begin{lemma}[Rectangle property, \cite{ds}] \label{lem:rectangle-property}
Let $\pi$ be a randomized protocol,  $v$ be an arbitrary node in the binary tree corresponding to $\pi$. Let $(x_1,y_1)$ and $(x_2,y_2)$ be two inputs. Then,
\[
\Pr[\Pi_{x_1y_1} \text{ reaches node } v] \times  \Pr[\Pi_{x_2y_2} \text{ reaches node } v]
= \Pr[\Pi_{x_1y_2} \text{ reaches node } v]\times \Pr[\Pi_{x_2y_1} \text{ reaches node } v].
\]
\end{lemma}

\begin{proof}
Let $u$ be an arbitrary node in the tree and $w$ be one of its children. Suppose Alice owns $u$, and the corresponding function associated with node $u$ is $a_u$.  Then
\begin{equation}  \label{eq:prob-A}
\Pr[\Pi_{xy} \text{ reaches node } w\ |\ \Pi_{xy} \text{ reaches node } u\ ]
= \Pr[a_u(x,r_A) = w],
\end{equation}
where $r_A$ is the private randomness that Alice has. Observe that this probability depends  only on $x$ and $r_A$, \emph{not} on $y$. Similarly, if Bob owns the node $u$, then this probability depends only on $y$ and $r_B$, \emph{not} on $x$. Hence, for any node $v$, 
\[
\Pr[\Pi_{xy} \text{ reaches node } v] = \alpha_v(x,r_A) \beta_v(y, r_B),
\]
where $\alpha_v(x,r_A)$ is simply the product of probabilities in the form \eqref{eq:prob-A} over all nodes that Alice owns in the unique path from the root of the tree to node $v$, and $\beta_v(y, r_B)$ is similarly given. The lemma immediately follows from this decomposition.
\end{proof}

We proceed to define information cost of a protocol. Consider the case when the input to a randomized protocol $\pi$ is also random. From the information theoretical point of view, one may ask what is the mutual information between the random transcript and the random input? 

\begin{definition}[Information cost of a protocol]   \label{def:infocost}
Let $\mu \in \Delta(\cX \times \cY)$. Let $\pi$ be a randomized protocol whose input $XY$ is from $\cX \times \cY$.  Let $\Pi$ denote the random transcript of $\pi$ when the  input $XY$ is randomly sampled according to $\mu$. That is, given a fixed transcript $t$, $\Pr[\Pi=t] = \Ex_{XY \sim \mu} \Pr[\Pi_{XY} = t]$.  The \emph{external information cost} and the \emph{internal information cost}    of protocol $\pi$ with respect to distribution $\mu$ are defined as
\begin{equation}  \label{eq:def-ICext-cost}
\IC_\mu^\ext(\pi) = I(\Pi; XY),
\end{equation}
and
\begin{equation}  \label{eq:def-IC-cost}
\IC_\mu(\pi) = I(\Pi; X |Y)+I(\Pi; Y |X).
\end{equation}
\end{definition}

We briefly introduce that one can view a randomized protocol as a random walk in the space $\Delta(\cX \times \cY)$, for detail see e.g., \cite{exactComm,my1}. Suppose without loss of generality that Alice owns the root of the tree, when Alice sends a bit $b \in \{0,1\}$, both players can update consistently from  the initial input distribution $\mu$ to the conditional distribution $\mu_b$ ($\mu$ conditioned on Alice sends $b$). In a randomized protocol, the bit $B \in \{0,1\}$ is a random bit, hence $\mu_B \in \Delta(\cX \times \cY)$ is a random distribution. Hence, a step of a randomized protocol can be equivalently viewed as a step of a random walk in the space $\Delta(\cX \times \cY)$ that goes from $\mu$ to $\mu_B$.  Repeating this process until a leaf $\ell$ of the tree is reached, let $\mu_\ell$ denote the conditional distribution of $\mu$ conditioned on the protocol reaches the leaf $\ell$. If we identify a transcript with its corresponding leaf, then $\mu_\ell = \mu | [\Pi = \ell]$.  Equivalently, $\mu_\ell \in \Delta(\cX \times \cY)$ is the final distribution reached from $\mu$ by a random walk in $\Delta(\cX \times \cY)$. 

\begin{lemma}  \label{lem:IC-over-leaves}
Let $\ell$ denote a leaf of the binary tree corresponding to a protocol $\pi$. We have, 
\[
\IC^\ext_\mu(\pi) = \sum_{\ell} \Pr[\Pi \text{ reaches }\ell] \Big(\ent_\mu(XY) - \ent_{\mu_\ell}(XY) \Big).
\]
\end{lemma}

\begin{proof}
By definition,
\begin{align*}
\IC^\ext_\mu(\pi) 
&= I(\Pi;XY) =\ent_\mu(XY) - \ent(XY|\Pi) \\
&= \ent_\mu(XY) -  \sum_{\ell}  \Pr[\Pi \text{ reaches }\ell] \ent(XY|\Pi=\ell) \\
&= \ent_\mu(XY) -  \sum_{\ell}  \Pr[\Pi \text{ reaches }\ell] \ent_{\mu_\ell}(XY) \\
&= \sum_{\ell} \Pr[\Pi \text{ reaches }\ell] \Big(\ent_\mu(XY) - \ent_{\mu_\ell}(XY) \Big). \qedhere
\end{align*}
\end{proof}
One can similarly obtain  a decomposition of $\IC_\mu(\pi)$, but we will not need it.

Finally, we define information complexities of a function. We focus on defining the various external information complexity measures of a function using the external information cost. The internal versions are similarly defined using the internal information cost. 
\begin{definition}[Information complexity of a function]   \label{def:infocomplexity}
Let $f: \cX \times \cY \to \cZ$ be a two-party function and $\pi$ be a protocol whose input comes from $\cX \times \cY$ and output is in domain $\cZ$. Let $0 \le \epsilon \le 1/2$.  

We say protocol $\pi$ computes $[f,\epsilon]$ if  $\Pr[\pi(x,y) \neq f(x,y)] \le \epsilon$ holds for \emph{every}  $(x,y) \in \cX \times \cY$. In this case, we say $\pi$ computes $f$ with \emph{point-wise} error $\epsilon$.

Let $\nu\in \Delta(\cX, \cY)$.  We say protocol $\pi$ computes $[f,\nu,\epsilon]$ if $\Ex_{(x,y) \sim \nu} \Pr[\pi(x,y) \neq f(x,y)] \le \epsilon$. In this case, we say
$\pi$ computes $f$ with \emph{$\nu$-distributional} (or simply distributional when $\nu$ is specified) error $\epsilon$. 

Let $\mu \in \Delta(\cX, \cY)$. Define the \emph{external information complexity} of $f$ with point-wise error $\epsilon$, with respect to distribution $\mu$, as,
\begin{equation}  \label{eq:def-ICext-eps}
\IC_\mu^\ext(f,\epsilon) = \inf_{\pi:\ \pi\ \text{computes}\ [f,\epsilon]} \IC_\mu^\ext(\pi).
\end{equation}

Let $\mu \in \Delta(\cX, \cY)$ and $\nu \in \Delta(\cX, \cY)$. Define the \emph{external information complexity} of $f$ with $\nu$-distributional error $\epsilon$, with respect to distribution $\mu$, as,
\begin{equation}  \label{eq:def-ICext-distri-eps}
\IC_\mu^\ext(f,\nu,\epsilon) = \inf_{\pi:\ \pi\ \text{computes}\ [f,\nu,\epsilon]} \IC_\mu^\ext(\pi).
\end{equation}

Define the \emph{prior-free external information complexity} of $f$ with point-wise error $\epsilon$ as,
\begin{equation}  \label{eq:def-ICext-pf-eps}
\IC^\ext(f,\epsilon) = \max_{\mu \in \Delta(\cX, \cY)}  \IC_\mu^\ext(f,\epsilon).
\end{equation}

Define the \emph{prior-free external information complexity} of $f$ with distributional error $\epsilon$ as,
\begin{equation}  \label{eq:def-ICext-distri-pf-eps}
\IC^{D,\ext}(f,\epsilon) = \max_{\mu \in \Delta(\cX, \cY)}  \IC_\mu^\ext(f,\mu,\epsilon). 
\end{equation}
Sometimes, we also call $\IC^{D,\ext}(f,\epsilon)$ as the distributional prior-free external information complexity of $f$ with (distributional) error $\epsilon$.

Replacing the external information cost  $\IC_\mu^\ext(\pi)$ by the internal information cost $\IC_\mu(\pi)$ in \eqref{eq:def-ICext-eps}, \eqref{eq:def-ICext-distri-eps}, \eqref{eq:def-ICext-pf-eps}, and \eqref{eq:def-ICext-distri-pf-eps}, we obtain the definitions for the corresponding internal information complexity measures of $f$: $\IC_\mu(f,\epsilon)$, $\IC_\mu(f,\nu,\epsilon)$, $\IC(f,\epsilon)$, and $\IC^{D}(f,\epsilon)$, respectively.
\end{definition}

There are two $\mu$ appearing in the definition \eqref{eq:def-ICext-distri-pf-eps}, this is \emph{not} a typo. The notation $\IC^\ext_\mu(f,\mu,\epsilon)$ means that, according to definition \eqref{eq:def-ICext-distri-eps}, we measure the information cost according to input distribution $\mu$ (see Definition \ref{def:infocost}), and we require the protocol $\pi$ to compute $[f,\mu,\epsilon]$ (i.e., $\Ex_{(x,y) \sim \mu} \Pr[\pi(x,y) \neq f(x,y)] \le \epsilon$).

When the error parameter $\epsilon = 0$, we simply write $\IC^\ext(f)$ to mean $\IC^\ext(f,0)$. This notation of suppressing $0$ error parameter will be used whenever it applies.

\subsection{Some inequalities}   \label{sec:simple}

\begin{lemma}[Pinsker,\cite{Kushilevitz1997Communication}] \label{lem:Pinsker}
$D(\mu \| \nu) \ge \frac{1}{2\ln 2} \norm{\mu-\nu}_1^2$, where $\norm{\cdot}_1$ denotes the $L_1$ norm.
\end{lemma}

\begin{lemma}[\cite{compress}] \label{lem:IC-relation}
For every distribution $\mu$ and protocol $\pi$, $\IC^\ext_\mu(\pi) \ge \IC_\mu(\pi)$.  Hence, for every $f$, $\mu$, and $\epsilon$, 
\begin{equation}   \label{eq:relation-btwn-diff-IC}
\IC^\ext_\mu(f, \epsilon)
\ge 
\begin{cases}
\IC_\mu^\ext(f,\mu,\epsilon) \\
\IC_\mu(f, \epsilon)
\end{cases}
\ge \IC_\mu(f,\mu,\epsilon).
\end{equation}
If $\mu$ is a product distribution, then $\IC^\ext_\mu(f, \epsilon) = \IC_\mu(f, \epsilon)$ and $\IC_\mu^\ext(f,\mu,\epsilon) = \IC_\mu(f,\mu,\epsilon)$.
\end{lemma}

\begin{lemma} \label{lem:entropy-small}
$h(x) \le 2\sqrt{x(1-x)}$ for $0 \le x \le 1$.
\end{lemma}

\begin{proof}
Let $g(x) = h(x) - 2\sqrt{x(1-x)}$. It is easy to verify (e.g., using Wolfram Mathematica) that there exist $0 < a < 1/2 < b < 1$ such that, $g''(x) \ge 0$ for $0 \le x \le a$ and $b \le x \le 1$, and $g''(x) \le 0$ for $a \le x \le b$. Also, $g'(1/2)= 0$.
Hence, the local maximum points of $g$ are $0,1/2$ and $1$. The claim follows as $g(0) = g(1/2) = g(1) = 0$.
\end{proof}

\begin{lemma} \label{lem:elementary}
If $m,n,r,s \ge 0$ and $mn=rs$, then $|m-r| + |m-s| \ge m-n$.
\end{lemma}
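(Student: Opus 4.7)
The plan is to dispose of the trivial case first and then reduce to a short case analysis. If $m \le n$ the right-hand side is non-positive while the left-hand side is non-negative, so the inequality holds vacuously. Hence assume $m > n$, and, since the statement is symmetric in $r$ and $s$, assume without loss of generality that $r \le s$. The first useful observation is that $r \le \sqrt{rs} = \sqrt{mn} \le m$, so in particular $|m-r| = m-r$; this means the only question is where $s$ sits relative to $m$.

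Next I would split on whether $s \le m$ or $s > m$. In the case $s \le m$, the left-hand side becomes $2m - r - s$, so the inequality is equivalent to $r + s \le m + n$. Here $r = mn/s \ge mn/m = n$, so $r$ ranges in $[n, \sqrt{mn}]$. Viewing $s = mn/r$ as a function of $r$, the quantity $r + s = r + mn/r$ has derivative $1 - mn/r^2 \le 0$ on this interval, so it is maximized at the left endpoint $r = n$, where it equals $m + n$. This gives $r + s \le m + n$ as required.

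In the case $s > m$, the left-hand side is $s - r$. From $mn = rs$ and $s > m$ we obtain $r = mn/s < n$, which already tells us $n - r > 0$. A direct algebraic manipulation then yields
\[
(s - r) - (m - n) \;=\; \frac{mn}{r} + n - m - r \;=\; \frac{(n-r)(m+r)}{r} \;\ge\; 0,
\]
completing this case.

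Really the only thing to watch is the handful of degenerate boundary configurations, for example $r = 0$ or $n = 0$, where the parameterization $s = mn/r$ is formally invalid; these are easy to verify by hand (if $r = 0$ then $mn = 0$, and the inequality reduces to $m + |m - s| \ge m - n$, which is immediate). I do not expect any genuine obstacle: once the observation $r \le \sqrt{mn} \le m$ is made, both sub-cases collapse to elementary calculus of the function $r \mapsto r + mn/r$ or to a one-line factorization.
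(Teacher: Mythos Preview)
Your proof is correct and follows essentially the same outline as the paper's: reduce to $m>n$, use the symmetry in $r,s$, observe that the smaller of $r,s$ is at most $\sqrt{mn}<m$, and then split on whether the larger of $r,s$ exceeds $m$. The only differences are in execution: in the first case the paper eliminates $n$ via $n=rs/m$ to obtain the linear expression $m-s-(1-s/m)r$, which is visibly minimized at $r=m$, whereas you eliminate $s$ and analyze $r+mn/r$; in the second case the paper simply says ``proved similarly'' while you supply the explicit factorization $(n-r)(m+r)/r$. Your treatment of the degenerate boundary configurations ($r=0$ or $n=0$) is adequate and the paper implicitly handles them the same way.
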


\begin{proof}
Assume $m > n$ as otherwise there is nothing to prove. By symmetry we assume $r \ge s$, then $m > s$. Either $m\ge r$ or $m < r$. If $m \ge r$, the inequality $|m-r| + |m-s| \ge m-n$ is equivalent to
$m - r - s + n 
\ge 0$. This is indeed true, since $m - r - s + n  = (m-s)(1- \frac{r}{m}) \ge 0$.
The case $m < r$ can be verified similarly.
\end{proof}

Lastly, we prove a lemma concerning divergence that will be useful in  Section \ref{sec:prod}.
By the definition of divergence, $D(\mu \| \nu) < \infty$ only if $\supp\mu \subseteq \supp\nu$. In another words, $D(\mu \| \nu)$ and $D(\nu \| \mu)$ are both finite only if $\supp\mu = \supp\nu$.

\begin{lemma}    \label{lem:Div-corrupted-upperbound}
Let $0\le \epsilon\le 1$, and suppose $D(\mu \| \nu) < \infty$ (i.e.,  $\supp \mu \subseteq \supp \nu$). Then we have
\[ 
D((1-\epsilon)\mu + \epsilon \nu \| \nu) \le  (1-\epsilon) D(\mu\|\nu) - (1 - \nu(\supp\mu)) \epsilon \log\frac{1}{\epsilon}. 
\]
\end{lemma}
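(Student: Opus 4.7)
The plan is to split the divergence by the support of $\mu$ and handle each piece separately. Set $S = \supp\mu$ and $\rho = (1-\epsilon)\mu + \epsilon \nu$. Since $\supp\mu \subseteq \supp\nu$ by hypothesis, $\nu(x) > 0$ for every $x \in S$, and the only other points contributing to $D(\rho\|\nu)$ are those in $\supp\nu \setminus S$. I would therefore write
\[
D(\rho\|\nu) = \sum_{x \in S} \rho(x) \log \frac{\rho(x)}{\nu(x)} + \sum_{x \in \supp\nu \setminus S} \rho(x) \log \frac{\rho(x)}{\nu(x)}.
\]

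The second sum is computed exactly. For $x \notin S$ we have $\mu(x) = 0$, so $\rho(x) = \epsilon \nu(x)$; the contribution is
\[
\sum_{x \in \supp\nu \setminus S} \epsilon \nu(x) \log \epsilon = \epsilon (1 - \nu(S)) \log \epsilon = -(1 - \nu(\supp\mu)) \epsilon \log(1/\epsilon),
\]
which is precisely the ``penalty'' term in the desired inequality.

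For the first sum I plan to apply the log-sum inequality pointwise, with a carefully chosen decomposition of the denominator. Decompose numerator and denominator compatibly as
\[
\rho(x) = (1-\epsilon)\mu(x) + \epsilon \nu(x), \qquad \nu(x) = (1-\epsilon)\nu(x) + \epsilon \nu(x).
\]
The log-sum inequality $\sum_i a_i \log(a_i/b_i) \ge (\sum_i a_i) \log\frac{\sum_i a_i}{\sum_i b_i}$, applied with $(a_1,b_1) = ((1-\epsilon)\mu(x),\,(1-\epsilon)\nu(x))$ and $(a_2,b_2) = (\epsilon\nu(x),\,\epsilon\nu(x))$, gives
\[
\rho(x) \log\frac{\rho(x)}{\nu(x)} \le (1-\epsilon)\mu(x) \log\frac{\mu(x)}{\nu(x)} + \epsilon \nu(x) \log 1 = (1-\epsilon)\mu(x) \log\frac{\mu(x)}{\nu(x)}.
\]
Summing over $x \in S$ yields $(1-\epsilon)D(\mu\|\nu)$, and combining with the contribution from $\supp\nu\setminus S$ completes the proof.

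I do not anticipate a serious obstacle; the single ``idea'' is to match the second term of the denominator expansion, $\epsilon \nu(x)$, to the second term of the numerator, which forces a $\log 1 = 0$ and exactly strips off the $\epsilon$-fraction. Without this observation, the naive joint-convexity bound already gives $D(\rho\|\nu)\le(1-\epsilon)D(\mu\|\nu)$ but misses the $\epsilon\log(1/\epsilon)$ saving coming from the mass of $\nu$ outside $\supp\mu$; the log-sum trick is exactly what isolates that saving.
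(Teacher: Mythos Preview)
Your proposal is correct and is essentially the paper's proof: both split the sum over $\supp\mu$ and $\supp\nu\setminus\supp\mu$, compute the latter exactly, and bound the former pointwise. Your log-sum inequality step is precisely the paper's ``convexity of $x\log x$'' step (with the same decomposition $\nu(x)=(1-\epsilon)\nu(x)+\epsilon\nu(x)$), just under a different name.
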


\begin{proof}
As $\supp((1-\epsilon)\mu + \epsilon \nu) = \supp\mu \cup \supp\nu = \supp\nu$, we have
$D((1-\epsilon)\mu + \epsilon \nu \| \nu) < \infty$.  By convexity of $x\log x$, we have
\begin{align*}
&D((1-\epsilon)\mu + \epsilon \nu \| \nu) \\
&= \sum_{x \in \supp\mu} ((1-\epsilon)\mu(x) + \epsilon \nu(x)) \log \frac{(1-\epsilon)\mu(x) + \epsilon \nu(x)}{\nu(x)} \\
&{\phantom{=}} + \sum_{x \in (\supp\nu \backslash \supp\mu)} ((1-\epsilon)\mu(x) + \epsilon \nu(x)) \log \frac{(1-\epsilon)\mu(x) + \epsilon \nu(x)}{\nu(x)}  \\
&\le \sum_{x \in \supp\mu}  (1-\epsilon)\mu(x) \log\frac{\mu(x)}{\nu(x)}
+ \sum_{x \in \supp\mu} \epsilon \nu(x) \log \frac{\nu(x)}{\nu(x)}
+ \epsilon \log\epsilon \sum_{x \in (\supp\nu \backslash \supp\mu)} \nu(x)  \\
&= (1-\epsilon) D(\mu\|\nu) + (1 - \nu(\supp\mu)) \epsilon \log(\epsilon).
\qedhere
\end{align*}
\end{proof}

\section{Lower bounds for information cost and trivial distributions}   \label{sec:explicit-lb}
In this section we prove explicit lower bounds for both external and internal  information costs of any protocol. The lower bounds  naturally lead us to characterize\footnote{Two among four characterizations have been given in \cite{my1}. Here we treat all the characterizations using the lower bounds.}  the structures of distributions for which the information complexity vanishes (we call such distributions as \emph{trivial distributions} that will be formally defined). Both the lower bounds  and the structures of trivial distributions are useful for later study. 

We introduce a notation that will be frequently used. Given a randomized protocol $\pi$, every input $a \in \cX \times \cY$ induces a distribution over the transcripts of $\pi$. Let $p^\pi_a$ denote this distribution, i.e., for every transcript $t$, 
\begin{equation} \label{eq:def-papi}
p^\pi_a(t) = \Pr[\Pi_a = t] =\Pr[\Pi = t|XY=a].
\end{equation}
Since every transcript $t$ corresponds uniquely to a leaf $\ell$ of the binary tree, the above definition is the same as $p^\pi_a(\ell) = \Pr[\Pi_a = \ell] =\Pr[\Pi = \ell|XY=a]$. That is, $p^\pi_a$ can be also viewed as a distribution over the leaves of the binary tree for $\pi$. We will lower bound the information cost by the $L_1$ distance of two such distributions.

\subsection{The external case}   \label{sec:explicit-lb-ext}

\begin{theorem}  \label{thm:general-lb-ext-cost}
Let $\mu \in \Delta(\cX \times \cY)$ and $\pi$ be a protocol. Define $\delta^\ext = \min_{(xy) \in \supp\mu} \mu(xy)^2 > 0$. For any two inputs $a, b \in \supp\mu$, 
\[
\IC^\ext_\mu(\pi) \ge  \frac{\delta^\ext}{2\ln 2} \norm{p^\pi_a - p^\pi_b}_1^2.
\]
\end{theorem}

\begin{proof}
Let $t$ denote a transcript, let $(x,y) \in \cX \times \cY$ be an input. We use the following notation.
\begin{align}   \label{eq:general-ext-cost-Notation}
p(xy) &= \mu(xy), \notag \\
p(t) &= \Pr[\Pi=t], \notag \\
p(t,xy) &= \Pr[\Pi=t, XY=xy], \\
p(t|xy) &= \Pr[\Pi=t|XY=xy] = \Pr[\Pi_{xy} = t]]. \notag
\end{align}
Note that in our notation, $p(t|xy) = p^\pi_{xy}(t)$.

By the definition of external information cost and Pinsker inequality, 
\begin{align}  \label{eq:temp1}
\IC^\ext_\mu(\pi) 
&= I(XY;\Pi) 
= D(p(t,xy) || p(xy) p(t)) 
\ge \frac{1}{2\ln 2}  \norm{p(t,xy) - p(xy) p(t)}^2_1  \\
&=  \frac{1}{2\ln 2} \left( \sum_{xy, t} p(xy) \Big| p(t|xy) - p(t) \Big|   \right)^2 
\ge \frac{\delta^\ext}{2\ln 2} \left( \sum_{xy, t} \Big| p(t|xy) - p(t) \Big|   \right)^2. \notag
\end{align}
Now for any $a, b \in \supp\mu$, if $a=b$, then $\norm{p^\pi_a - p^\pi_b}_1 = 0$ and the lower bound in the theorem trivially holds. Assume $a=x_1y_1 \neq b=x_2y_2$, then
\begin{align}  \label{eq:temp2}
\sum_{xy, t} \Big| p(t|xy) - p(t) \Big|  
&\ge \sum_t \left( \Big| p(t|x_1y_1) - p(t) \Big|  + \Big| p(t|x_2y_2) - p(t) \Big| \right) \\
&\ge \sum_t \Big| p(t|x_1y_1) - p(t|x_2y_2)\Big| 
= \norm{p^\pi_a - p^\pi_b}_1. \notag
\end{align}
The theorem follows from \eqref{eq:temp1} and \eqref{eq:temp2}.
\end{proof}

Theorem \ref{thm:general-lb-ext-cost} immediately gives explicit lower bounds for external information complexity. We first prove a simple lemma.

\begin{lemma}  \label{lem:L1dist-explicit}
Let $f: \cX \times \cY \to \cZ$ be a function, $a, b \in \cX \times \cY$ and $f(a)\neq f(b)$. Suppose a protocol $\pi$ computes both $f(a)$ and $f(b)$ with error at most $\epsilon \in [0,1/2]$, that is, $\Pr[\Pi(a) \neq f(a)] \le \epsilon$ and $\Pr[\Pi(b) \neq f(b)] \le \epsilon$. Then,
\[
2 - 4\epsilon \le \norm{p^\pi_a - p^\pi_b}_1 \le 2.
\]
In particular, if $\pi$ computes both $f(a)$ and $f(b)$ correctly, then $\norm{p^\pi_a - p^\pi_b}_1 = 2$.
\end{lemma} 

\begin{proof}
Suppose without loss of generality $f(a)=0$ and $f(b) = 1$.
Let $\cT$ be the set of all possible transcripts of $\pi$, and $\cT_0, \cT_1$ correspond to the set of transcripts with output $0,1$, respectively. The assumption implies,
\[
\sum_{t\in \cT_0} p^\pi_a(t) = \Pr[\Pi(a) =0] \ge 1-\epsilon, \quad
\sum_{t\in \cT_0} p^\pi_b(t) = \Pr[\Pi(b) =0] \le \epsilon.
\]
Hence,
\[
\sum_{t\in \cT_0} \Big| p^\pi_a(t) - p^\pi_b(t) \Big|
\ge \sum_{t\in \cT_0} p^\pi_a(t) - \sum_{t\in \cT_0} p^\pi_b(t) \ge 1-2\epsilon.
\]
Similarly $\sum_{t\in \cT_1} \Big| p^\pi_a(t) - p^\pi_b(t) \Big| \ge 1-2\epsilon$. Hence $\norm{p^\pi_a - p^\pi_b}_1 \ge 2(1-2\epsilon)$.

The upper bound follows from the triangle inequality for the $L_1$ norm.
\end{proof}

\begin{corollary}  \label{cor:general-lb-ext-complexity}
Let $f: \cX \times \cY \to \cZ$ be a function,  $\mu$ be a distribution on $\cX \times \cY$. Let $\delta^\ext$ be defined as in Theorem \ref{thm:general-lb-ext-cost}. 
Then, 
\begin{itemize}
\item[(1)] For every $0\le \epsilon \le 1/2$, for every distribution $\mu$   such that $f$ on $\supp\mu$ is not a constant, 
\[
\IC_\mu^\ext(f,\epsilon) \ge  \frac{2(1-2\epsilon)^2\delta^\ext}{\ln 2} > 0,
\]

\item[(2)] For every distribution $\mu$  such that $f$ on $\supp\mu$ is not a  constant,  
\[
\IC_\mu^\ext(f,\mu,0) \ge  \frac{2\delta^\ext}{\ln 2} > 0.
\]
\end{itemize}
\end{corollary}

\begin{proof}
For (1), let $\pi$ be a protocol computes $[f,\epsilon]$. By the assumption, there exist $a\neq b$ both in $\supp\mu$ such that $f(a) \neq f(b)$, and $\pi$ computes both $a$ and $b$ with error at most $\epsilon$. Applying Theorem \ref{thm:general-lb-ext-cost} and Lemma \ref{lem:L1dist-explicit} proves the result.

For (2), let $\pi$ be a protocol that computes $[f,\mu,0]$. Then, $\pi$ computes every input in $\supp\mu$ correctly. Then (2) can be proved similarly as (1).
\end{proof}

\subsection{The internal case} \label{sec:explicit-lb-int}
A similar lower bound as in Theorem \ref{thm:general-lb-ext-cost}  holds for the $\IC_\mu(\pi)$, though it holds under an extra condition that we shall specify.

\subsubsection{The associated graph of a distribution}   \label{sec:assocaited-graph}
Let us think of the product set $\cX \times \cY$ as a matrix where each entry $(x,y)$ is naturally indexed by its corresponding row $x$ and corresponding column $y$. For two entries $a, b \in \cX \times \cY$, $a\neq b$, we use the notation $a\#b$ to mean $a$ and $b$ are either in the same row or in the same column. Given a distribution $\mu\in \Delta(\cX \times \cY)$, we will construct a graph $G_\mu$ whose vertex set is $\supp\mu$.

\begin{definition}[The associated graph of $\mu$, \cite{my1}]   \label{def:associated-graph}
On $\cX \times \cY$, construct a graph $G$ as follows:
\begin{itemize}
\item $V(G) = \cX \times \cY$, i.e., every point $a \in \cX \times \cY$ is a vertex in $G$;
\item $E(G)$: for every two vertices $a \neq b$, there is an edge $ab \in E(G)$ if and only if $a\#b$. 
\end{itemize}
Let $G_\mu$ be the subgraph of $G$ induced by the support of $\mu$. We call it the \emph{associated graph} of distribution $\mu$. For every connected component $C$ of $G_\mu$, define
 \begin{align*}
 C_A &= \{ x \in \cX : xy \in C \text{ for some } y \in \cY \}, \\
 C_B &= \{ y \in \cY : xy \in C \text{ for some } x \in \cX \}.
 \end{align*}
\end{definition}

Thus, $C_A \times C_B$ defines a submatrix in $\cX \times \cY$. A basic property of distinct connected components of $G_\mu$ is the following.

\begin{lemma}  \label{lem:property-connect-Gmu}
Let $C, C'$ be two distinct connected components of $G_\mu$ for a distribution $\mu$, then
\[
C_A \cap C'_A = \emptyset,\quad C_B \cap C'_B = \emptyset.
\]
\end{lemma}

\begin{proof}
Suppose for the sake of contradiction $x \in C_A \cap C'_A$. Then there are $xy \in C$, and $xy' \in C'$. By the definition of the graph $G_\mu$, there is an edge connecting $xy$ and $xy'$, contradicting to the fact that $C$ and $C'$ are distinct connected components of $G_\mu$. Similarly, one has $C_B \cap C'_B = \emptyset$.
\end{proof}
 
\begin{corollary}  \label{cor:decompose-Gmu}
Let $C_1, C_2, \ldots, C_k$ be all the distinct connected components of $G_\mu$ for a distribution $\mu$. Let $A \subseteq \cX$ be the set of $x$ where $\mu(x) > 0$, and $B \subseteq \cY$ be the set of $y$ where $\mu(y) > 0$. Then,
\[
A = \cup_{i=1}^k C_{i,\cX}, \quad B = \cup_{i=1}^k C_{i,\cY},
\]
and $C_{i,\cX}$ are pairwise disjoint, and $C_{i,\cY}$ are pairwise disjoint.
\end{corollary}

\begin{proof}
Apply directly Lemma \ref{lem:property-connect-Gmu}.
\end{proof}

Intuitively, by rearranging if necessary, Corollary \ref{cor:decompose-Gmu} means that these submatrices $C_{i,A} \times C_{i,B}$ place themselves in a diagonal fashion in $\cX \times \cY$.

\subsubsection{The lower bound}

\begin{theorem}  \label{thm:general-lb-int-cost}
Let $\mu \in \Delta(\cX \times \cY)$ and $\pi$ be a protocol. Define $\delta = \min_{(xy) \in \supp\mu} \{\frac{\mu(xy)^2}{\mu(x)}, \frac{\mu(xy)^2}{\mu(y)}\} > 0$. For any two inputs $a$ and $b$ that lie  in the same connected component of $G_\mu$,
\[
\IC_\mu(\pi) \ge  \frac{\delta}{(|\cX| + |\cY|)2\ln 2} \norm{p^\pi_a - p^\pi_b}_1^2.
\]
\end{theorem}

\begin{proof}
By definition, 
\[
\IC_\mu(\pi) = I(\Pi;Y|X) + I(\Pi;X|Y) = \sum_x \mu(x) I(\Pi;Y|X=x) + \sum_y \mu(y) I(\Pi;X|Y=y),
\] 
where $\mu(x)$ is the marginal of $\mu$ on $x$. 
Hence, those $x$ such that $\mu(x) = 0$ have no contribution to the information cost. The same applies to $y$.  Hence, without loss of generality, we assume $\mu(x), \mu(y) > 0$ hold for all $x \in \cX$ and $y \in \cY$.

Given a transcript $t$, an input $xy \in \cX \times \cY$. Let $p(t, y|x) = \Pr[\Pi=t,Y=y|X=x]$, $p(t|x) = \Pr[\Pi=t|X=x]$, and $p(y|x) = \mu(y|X=x)$. We also use the notation $p(xy)$ to mean $\mu(xy)$ when it is more convenient. One has, 
\begin{equation}  \label{eq:temp}
\mu(x) \Big( p(t,y|x) - p(t|x) p(y|x) \Big)
= p(t, xy) - p(t|x) p(xy)
= \mu(xy) \Big( p(t|xy)- p(t|x) \Big).
\end{equation}
Hence, by Pinsker inequality,
\begin{align*}
I(\Pi;Y|X) 
&= \sum_x \mu(x) I(\Pi;Y|X=x)
= \sum_x \mu(x) D(p(t,y|x) \| p(t|x) p(y|x)) \\
&\ge \sum_x  \frac{\mu(x)}{2\ln 2}  \norm{p(t,y|x) - p(t|x) p(y|x)}^2_1 \\
&= \sum_x  \frac{\mu(x)}{2\ln 2} \left( \sum_{t,y} \Big| p(t,y|x) - p(t|x) p(y|x) \Big| \right)^2 \\
&= \sum_x  \frac{1}{2\ln 2} \left( \sum_{t,y} \frac{\mu(x)}{\sqrt{\mu(x)}}\Big| p(t,y|x) - p(t|x) p(y|x) \Big| \right)^2 \\
&= \sum_x  \frac{1}{ 2\ln 2} \left( \sum_{t,y} \frac{\mu(xy)}{\sqrt{\mu(x)}} \Big| p(t|xy) - p(t|x) \Big| \right)^2 \\
&\ge \sum_x  \frac{\delta}{ 2\ln 2} d(p(t|xy),p(t|x))^2,
\end{align*}
where we used \eqref{eq:temp} in the last equality, and we used the notation $d(p(t|xy),p(t|x)) = \sum_{t,y} | p(t|xy) - p(t|x) |$.

Obviously, a similar lower bound holds for $I(\Pi;X|Y)$. Hence, 
\begin{equation}   \label{eq:lb}
\IC_\mu(\pi) 
\ge \sum_x  \frac{\delta}{ 2\ln 2} d(p(t|xy),p(t|x))^2 + \sum_y  \frac{\delta}{ 2\ln 2} d(p(t|xy),p(t|y))^2.
\end{equation}

Now, let $C$ be a connected component of $G_\mu$, and  $a, b \in C$. If $a = b$, then the lower bound in the theorem is simply $0$. Assume $a \neq b$. 

Consider firstly the simple case when $a$ and $b$ are in the same row or in the same column. Without loss of generality, assume that they are in the same row: $a=x_0 y_0, b=x_0 y_1$. Then 
\begin{align} \label{eq:lb-onerow}
d(p(t|x_0y),p(t|x_0))
&= \sum_{t,y}  \Big|p(t|x_0y) - p(t|x_0) \Big|  \notag \\
&\ge \sum_{t}  \Big| p(t|x_0y_0) - p(t|x_0) \Big| + \sum_{t}  \Big| p(t|x_0y_1) - p(t|x_0) \Big|  \\
&\ge  \sum_{t}  \Big| p(t|x_0y_0) - p(t|x_0y_1) \Big|
= \norm{p^\pi_a - p^\pi_b}_1. \notag
\end{align}
Combine this with \eqref{eq:lb} proves the theorem.

In general, $a$ and $b$ are connected by a path in $C$. Suppose the shortest path connecting $a$ and $b$ has length $k$. It is easy to see that $k \le |\cX| + |\cY|$. We will apply a telescoping argument. Indeed, let $a=a_0, b=a_k$, and suppose they are connected successively by $a_1, a_2, \ldots, a_{k-1} \in C \subseteq \supp\mu$. For every pair of consecutive points $a_i$ and $a_{i+1}$, they are either in the same row or in the same column. If they are in the same row, suppose this row corresponds to $x_i$, then by \eqref{eq:lb-onerow}, $d(p(t|x_iy),p(t|x_i)) \ge \norm{p^\pi_{a_i} - p^\pi_{a_{i+1}}}_1$. The case when they are in the same column is similar and the same lower bound holds. Observe also that every row or column contains at most two points from $a_0, a_1, a_2, \ldots, a_k$, because the path we choose has minimal length. An example is shown in Figure \ref{fig:telescope}.

\begin{figure}[h!]   
\begin{center}
\begin{tikzpicture}
\draw (0,0) -- (7,0) -- (7,7) -- (0,7) -- (0,0);

\draw [fill] (2,6) circle [radius=0.05];
\node [below] at (2,6) {$a_0$};
\draw [fill] (3,6) circle [radius=0.05];
\draw [fill] (4,6) circle [radius=0.05];
\node [right] at (4,6) {$a_1$};
\draw [fill] (4,5) circle [radius=0.05];
\draw [fill] (4,4) circle [radius=0.05];
\draw  (4,3) circle [radius=0.05];
\draw [fill] (4,2) circle [radius=0.05];
\node [left] at (4,2) {$a_2$};
\draw [fill] (5,2) circle [radius=0.05];
\node [right] at (5,2) {$a_3$};
\draw [fill] (5,1) circle [radius=0.05];
\node [below] at (5,1) {$a_4$};
\draw (4,1) circle [radius=0.05];
\draw (3,1) circle [radius=0.05];
\draw (2,1) circle [radius=0.05];
\draw [fill] (1,1) circle [radius=0.05];
\node [below] at (1,1) {$a_5$};

\draw (2,6) to [out=20,in=160] (4,6);
\draw (4,6) to [out=290,in=70] (4,2);
\draw (4,2) to [out=10,in=170] (5,2);
\draw (5,2) to [out=280,in=80] (5,1);
\draw (5,1) to [out=200,in=340] (1,1);
\end{tikzpicture}
\end{center}
\caption{The shortest path from $a_0$ to $a_5$ in a connected component $C$, the filled dots indicate that they are in $C$.} \label{fig:telescope}
\end{figure}

Apply the lower bound from \eqref{eq:lb-onerow} to all the consecutive points in the path. By \eqref{eq:lb}, we get
\[
\IC_\mu(\pi) 
\ge \frac{\delta}{ 2\ln 2}  \sum_{i=0}^{k-1} \norm{p^\pi_{a_i} - p^\pi_{a_{i+1}}}^2_1 
\ge \frac{\delta}{ k 2\ln 2} \left( \sum_{i=0}^{k-1} \norm{p^\pi_{a_i} - p^\pi_{a_{i+1}}}_1 \right)^2 
\ge \frac{\delta}{ k 2\ln 2} \norm{p^\pi_{a_0} - p^\pi_{a_k}}_1^2,
\]
where the second inequality is by Cauchy-Schwarz, and the last one is by the triangle inequality for the $L_1$ norm.
\end{proof}

\begin{corollary}  \label{cor:general-lb-int-complexity}
Let $f: \cX \times \cY \to \cZ$ be a function, $\mu$ be a distribution on $\cX \times \cY$. Let $\delta$ be as in Theorem \ref{thm:general-lb-int-cost}. 
Then, 
\begin{itemize}
\item[(1)] For every $0\le \epsilon \le 1/2$, for every distribution $\mu$ such that  there exists at least one connected component of $G_\mu$ on which $f$ is not a constant,
\[
\IC_\mu(f,\epsilon) \ge  \frac{2(1-2\epsilon)^2\delta}{(|\cX| +|\cY|)\ln 2} > 0.
\]

\item[(2)] For every distribution $\mu$ such that  there exists at least one connected component of $G_\mu$ on which $f$ is not a constant,
\[
\IC_\mu(f,\mu,0) \ge  \frac{2\delta}{(|\cX| +|\cY|)\ln 2} > 0.
\]
\end{itemize}
\end{corollary}

\begin{proof}
Apply Theorem \ref{thm:general-lb-int-cost} and Lemma \ref{lem:L1dist-explicit}. The proof is similar to Corollary \ref{cor:general-lb-ext-complexity}.
\end{proof}

\subsection{Trivial distributions and trivial functions}  \label{sec:trivial-measures}

Obviously, computing constant functions requires no information cost. Intuitively, it seems to compute any non-constant function requires nonzero information cost. To establish this fact rigorously, we start by investigating trivial distributions.

\begin{definition}  \label{def:trivial}
Let  $f: \cX \times \cY \to \cZ$ be a function and $\mu$ be a distribution on $\cX \times \cY$. We say $\mu$ is,
\begin{itemize}
\item distributional external trivial, if $\IC_\mu^\ext(f,\mu,0) = 0$;
\item distributional internal trivial, if $\IC_\mu(f,\mu,0) = 0$;
\item external trivial, if $\IC^\ext_\mu(f) = 0$;
\item internal trivial, if $\IC_\mu(f) = 0$.
\end{itemize}
\end{definition}
 
\begin{remark}
The external trivial distributions and internal trivial distributions are already defined and characterized in \cite{my1}. However, there is a typo in \cite{my1} in defining external trivial distributions: it was defined by the condition $\IC_\mu^\ext(f,\mu,0) = 0$, but it should be  $\IC^\ext_\mu(f) = 0$. A same typo was made for defining the internal trivial distributions.
\end{remark}

The following theorem characterizes all the trivial distributions. 
 
\begin{theorem}  \label{thm:char-trivial-distri}
Let $f: \cX \times \cY \to \cZ$ be a function, $\mu$ a distribution on $\cX \times \cY$. 

The following are equivalent conditions for distributional external trivial distributions:
\begin{itemize}
\item[(1)] $\IC_\mu^\ext(f,\mu,0) = 0$; 
\item[(2)] $f$ on $\supp\mu$ is a constant;
\item[(3)] there exists a protocol $\pi$ that computes $[f,\mu,0]$ and $\IC_\mu^\ext(\pi) = 0$. 
\end{itemize}

The following are equivalent conditions for external trivial distributions:
\begin{itemize}
\item[(4)] $\IC^\ext_\mu(f) = 0$;
\item[(5)] $f$ is constant on $S_A \times S_B$, where $S_A$ is the support of the marginal of $\mu$ on Alice's input and $S_B$ is the support of the marginal of $\mu$ on Bob's input.
\item[(6)] there exists a protocol $\pi$ that computes $[f,0]$ and $\IC^\ext_\mu(\pi) = 0$.
\end{itemize}

The following are equivalent conditions for distributional internal trivial distributions:
\begin{itemize}
\item[(1')] $\IC_\mu(f,\mu,0) = 0$; 
\item[(2')] on every connected component $C$ of $G_\mu$, the restriction of $f$ on $C$ is a constant;
\item[(3')] there exists a protocol $\pi$ that computes $[f,\mu,0]$ and $\IC_\mu(\pi) = 0$. 
\end{itemize}

The following are equivalent conditions for internal trivial distributions:
\begin{itemize}
\item[(4')] $\IC_\mu(f) = 0$; 
\item[(5')] on every connected component $C$ of $G_\mu$, the restriction of $f$ on $C_A \times C_B$ is a constant;
\item[(6')] there exists a protocol $\pi$ that computes $[f,0]$ and $\IC_\mu(\pi) = 0$. 
\end{itemize}
\end{theorem}

\begin{proof}
$(1) \Longrightarrow (2)$: by Corollary \ref{cor:general-lb-ext-complexity}. 

$(2) \Longrightarrow (3)$: Suppose $f_{\supp\mu} \equiv c$. Consider the protocol $\pi$ that simply outputs $c$ without any communication. Obviously $\pi$ computes $[f,\mu,0]$ and has zero information cost.

$(3) \Longrightarrow (1)$: This is obvious by  the definition of information complexity.

$(4) \Longrightarrow (5)$: We will use Theorem \ref{thm:general-lb-ext-cost} together with the rectangle property of transcripts. Firstly, $0 \le \IC_\mu^\ext(f,\mu,0) \le \IC_\mu^\ext(f) =0$ implies $\IC_\mu^\ext(f,\mu,0) = 0$. Hence, $f$ is a constant on $\supp \mu$. Suppose $f|_{\supp\mu} \equiv z$. For the sake of contradiction suppose that (2) does not hold, i.e., there exists an input $c=xy \in S_A \times S_B$ such that $f(c) \neq z$. Since $xy \in S_A \times S_B$, there exists $x', y'$ such that $a=x'y, b=xy' \in \supp\mu$. Note that $f(a) = f(b) = z$.  Let $d = x'y'$. Let $\pi$ be any protocol that computes $[f,0]$, and $t$ be any transcript of $\pi$. The rectangle property says 
\[
p^\pi_a(t) p^\pi_b(t) = p^\pi_c(t) p^\pi_d(t).
\]
We claim that, for every transcript $t$, 
\begin{equation}  \label{eq:only-one-positive}
\begin{cases}
p^\pi_a(t) > 0 &\Longrightarrow p^\pi_b(t) = 0, \\
p^\pi_b(t) > 0 &\Longrightarrow p^\pi_a(t) = 0.
\end{cases}
\end{equation}
Indeed, since $\pi$ computes $f(a)$ correctly, $p^\pi_a(t) > 0$ implies the output of $t$ is $f(a) = z$. If $p^\pi_b(t) > 0$ at the same time, then the rectangle property implies $p^\pi_c(t) > 0$. Hence, $\Pr[\Pi(c) = z] \ge p^\pi_c(t) > 0$. However, we know $f(c) \neq z$. This contradicts the fact that $\pi$ computes $c$ correctly. Hence \eqref{eq:only-one-positive} holds. It is easy to see that \eqref{eq:only-one-positive} implies $\norm{p^\pi_a - p^\pi_b}_1 = 2$.  By Theorem \ref{thm:general-lb-ext-cost},  $\IC^\ext_\mu(f) = \inf_\pi \IC^\ext_\mu(\pi) \ge \frac{2\delta^\ext}{\ln 2} > 0$, a contradiction.

$(5) \Longrightarrow (6)$: Consider the following protocol. Alice tells Bob whether her input is in $S_A$. Bob tells Alice whether his input is in $S_B$. If the input is in $S_A \times S_B$, then the output is known. Otherwise, the players reveal their inputs (but this happens with probability zero). It is not difficult to check that this protocol has zero external information cost.

$(6) \Longrightarrow (4)$: Obvious.

$(1') \Longrightarrow (2')$: by Corollary \ref{cor:general-lb-int-complexity}.

$(2') \Longrightarrow (3')$: Let $C_1, C_2, \ldots, C_k$ be all the connected components of $G_\mu$, and suppose $f|_{C_i} \equiv d_i$. Consider the protocol $\pi$ that Alice sends $0$ if her input does not belong to the marginal of $\supp\mu$, and otherwise she sends the index $i$ to which $x \in C_{i,A}$ (note that this is possible by Corollary \ref{cor:decompose-Gmu}); and Bob does similarly. The protocol outputs $d_i$ if both Alice and Bob send $i$, and otherwise outputs an arbitrary $z \in \cZ$. Obviously $\pi$ computes $[f,\mu,0]$. Next we show $\IC_\mu(\pi)  = 0$. By definition, $\IC_\mu(\pi)  = I(X;\Pi|Y) + I(Y;\Pi|X)$. Consider $I(X;\Pi|Y) = \ent(X|Y) - \ent(X|Y\Pi)$. Observe that when $(x,y)$ is sampled according to $\mu$, from knowing $y \in C_{i,B}$ Bob knows that $x \in C_{i,A}$ holds by Corollary \ref{cor:decompose-Gmu}. As a result, $\Pi_{xy}$ is determined by $y$. Hence $\ent(X|Y\Pi) = \ent(X|Y)$, i.e., $I(X;\Pi|Y)  = 0$. By symmetry, $I(Y;\Pi|X) = 0$. Hence $\IC_\mu(\pi)= 0$.

$(3') \Longrightarrow (1')$: Obvious.

The proof for the equivalence of (4'), (5'), (6') can be obtained by combining the proof for equivalence of (4), (5), (6), and the proof for the equivalence of (1'), (2'), (3').
\end{proof}

The following is an interesting corollary for product distributions. 

\begin{corollary}  \label{cor:trivial-distri-prod-distri}
Let $f: \cX \times \cY \to \cZ$ be a function and $\mu$ a product distribution on $\cX \times \cY$. Then, the following are equivalent:
(1) $\IC^{\ext}_\mu(f,\mu,0) = 0$; (2) $\IC^{\ext}_\mu(f,0) = 0$; (3) $\IC_\mu(f,\mu,0) = 0$; (4) $\IC_\mu(f,0) = 0$;  (5) $f$ on $\supp\mu$ is a constant; (6) there exists a protocol $\pi$ that computes $[f,0]$ and $\IC^\ext_\mu(\pi) =0$. 
\end{corollary}

\begin{proof}
Since $\mu$ is a product distribution, Lemma \ref{lem:IC-relation} implies that (1) and (3) are the same, and (2) and (4) are the same. Since $\mu$ is a product distribution, it is easy to see that $\supp\mu = S_A \times S_B$ where $S_A$ and $S_B$ are as in item (5) of Theorem \ref{thm:char-trivial-distri}. Apply Theorem \ref{thm:char-trivial-distri} gives all the equivalences. 
\end{proof}

Corollary \ref{cor:trivial-distri-prod-distri} says that under a product distribution $\mu$, in computing a function with zero error, if one of the four types of information complexity measures vanishes, then they all vanish. However, if nonzero error $\epsilon > 0$ in computing a function is allowed, then $\IC_\mu(f,\mu,\epsilon) = 0$ does not imply $\IC_\mu(f,\epsilon) = 0$ even if $\mu$ is a product distribution. See the example discussed in Section \ref{sec:large-err-distri}.


We are now ready to show that only constant functions are trivial (i.e., requires no information cost to compute).

\begin{corollary}  \label{cor:trivial-function}
Let $f: \cX \times \cY \to \cZ$ be a function. Then, the following are equivalent:
(1) $\IC^{\ext}(f) = 0$; (2) $\IC^{D,\ext}(f) = 0$; (3) $\IC(f) = 0$; (4) $\IC^D(f) = 0$; (5) $f$ is a constant function.
\end{corollary}

\begin{proof}
$(1) \Longrightarrow (2) \Longrightarrow (4)$: This follows from Lemma \ref{lem:IC-relation}.

$(4) \Longrightarrow (5)$: By definition of $\IC^D(f)$, $\IC_\mu(f,\mu,0) = 0$ for every distribution $\mu$. Pick $\mu$ to be the uniform distribution, then Theorem \ref{thm:char-trivial-distri} implies  that $f$ is a constant on $\supp\mu = \cX \times \cY$.

$(5) \Longrightarrow (1)$: obvious.

We have established that (1), (2), (4), and (5) are equivalent. Similarly one can show (1), (3), (4), and (5) are equivalent. Hence, they are all equivalent.
\end{proof}

In fact, we obtain an explicit lower bound for the prior-free information complexity that does not vanish.

\begin{corollary}  \label{cor:explicit-priof-free-IC-non-zero}
Let $f: \cX \times \cY \to \cZ$ be a non-constant function. Then, 
\[
\IC^D(f) \ge \frac{1}{(|\cX|+|\cY|) \ln 16} > 0.
\]
The same lower bound holds for $\IC^{\ext}(f), \IC^{D,\ext}(f)$, and $\IC(f)$.
\end{corollary}

\begin{proof}
Since $f$ is non-constant, there exist $(x,y), (x',y') \in \cX \times \cY$ such that $f(x,y) \neq f(x',y')$. Consider the case when $x \neq x'$ and $y \neq y'$. Consider the distribution $\mu$ that is a uniform distribution on the rectangle $(x,y), (x',y), (x,y'), (x',y')$.  Apply part (2) in Corollary \ref{cor:general-lb-int-complexity} finishes the proof. The case when either $x = x'$ or $y=y'$ gives a stronger lower bound via the same argument.
\end{proof}

\section{Trading information complexity for a large error}   \label{sec:large-err}
How much information cost must be \emph{revealed} in order to compute a Boolean-valued function with an (point-wise or distributional) error at most $1/2 - \epsilon$ when $\epsilon >0$ is small? We study how the information cost depends on the parameter $\epsilon$. 

\subsection{The  point-wise error}  \label{sec:large-err-pt}
By Lemma \ref{lem:IC-relation}, the external information cost is no less than the internal information cost. Hence, we will prove an upper bound for the external information complexity and a lower bound for the internal information complexity.

\begin{theorem} \label{thm:large-err-pt-ub}
For every $f: \cX \times \cY \to \{0,1\}$, every distribution $\mu \in \Delta(\cX \times \cY)$, and every $0\le \epsilon \le 1/2$, we have $\IC_\mu^\ext\left(f, 1/2-\epsilon\right) \le 2\epsilon \IC_\mu^\ext(f,0)$.
Similarly, $\IC_\mu(f,1/2-\epsilon) \le 2\epsilon \IC_\mu(f,0)$.
\end{theorem}

\begin{proof}
Consider a protocol $\pi$ in which with probability $1-2\epsilon$, Alice and Bob simply output $0$ or $1$ with equal probability,  and with probability $2\epsilon$ Alice and Bob run a protocol that computes $[f,0]$ and has optimal (or near-optimal) external information cost. Obviously, protocol $\pi$ has error at most $\frac{1-2\epsilon}{2} = 1/2 - \epsilon$, and has external information cost $\IC_\mu^\ext(\Pi) \le 2\epsilon \IC_\mu^\ext(f,0)$.
\end{proof}

We proceed to show a lower bound. To simplify the statement of the theorem we introduce a terminology. Given a function $f: \cX \times \cY \to \cZ$ and a distribution $\mu$ on $\cX \times \cY$, let $C$ be a connected component of $G_\mu$ (see Definition \ref{def:associated-graph}). We say  \emph{$C$ contains an $\AND$ block}  if its corresponding matrix $C_A \times C_B$ contains a submatrix $\begin{pmatrix}
a & b \\
c & d
\end{pmatrix}$ such that $a,b,c\in C \subseteq \supp\mu$, but $d\not\in C$, and $f(a)=f(b)=f(c)\neq f(d)$.

\begin{theorem}   \label{thm:large-err-pt-lb}
Let $f: \cX \times \cY \to \{0,1\}$ and distribution $\mu$ be such that $\IC_\mu(f,0) > 0$. Let $\delta > 0$ be as defined in Theorem \ref{thm:general-lb-int-cost}. For every $0\le \epsilon \le 1/2$, if either of the following two conditions is satisfied:
\begin{enumerate}[(1)]
\item there is a connected component $C$ of $G_\mu$ such that $f$ is not a constant on $C$, 

\item  there is a connected component $C$ of $G_\mu$ such that $C$ contains an $\AND$ block,
\end{enumerate}
then 
\[
\IC_\mu(f,1/2-\epsilon) \ge \frac{\delta}{(|\cX| + |\cY|)2\ln 2} \epsilon^2.
\]
\end{theorem}

\begin{proof}
Let $\pi$ be an arbitrary protocol that computes $[f,1/2-\epsilon]$. 
By Theorem~\ref{thm:general-lb-int-cost}, it suffices to find an $\Omega(\epsilon)$ lower bound for the $L_1$ distance between the distribution $p^\pi_a$ and $p^\pi_b$ for two points $a$ and $b$ in some connected component of $G_\mu$. By Theorem \ref{thm:char-trivial-distri}, $\IC_\mu(f,0) > 0$ implies  the existence of a connected component $C$ of $G_\mu$ such that $f$ is not constant on $C_A \times C_B$.

{\noindent \textbf{Case 1}}: $f$ is not constant on $C$. Hence there exists $a, b \in C$ such that $f(a) \neq f(b)$. Since $\pi$ computes $[f, 1/2-\epsilon]$, by Lemma \ref{lem:L1dist-explicit}, $\norm{p^\pi_a - p^\pi_b}_1 \ge 2 - 4(1/2 - \epsilon) = 4\epsilon$. 

{\noindent \textbf{Case 2}}: $f$ is constant on $C$ and $C$ contains an $\AND$ block. Without loss of generality assume $f=0$ on $C$. Let the $\AND$ block be
$\begin{pmatrix}
a & b \\
c & d
\end{pmatrix}$.
Then, $f(a)=f(b)=f(c) = 0$ and $f(d) = 1$. Note that $d \not\in C$. For every transcript $t$, the rectangle property Lemma \ref{lem:rectangle-property} says $p_a^\pi(t)p_d^\pi(t) = p_b^\pi(t)p_c^\pi(t)$. Let $T$ be the set of transcripts that output $0$. By Lemma~\ref{lem:elementary}, 
\begin{align*}
\norm{p^\pi_a - p^\pi_b}_1 + \norm{p^\pi_a - p^\pi_c}_1 
&\ge \sum_{t\in T} \Big( |p_a^\pi(t) - p_b^\pi(t)| + |p_a^\pi(t) - p_c^\pi(t)| \Big) \\
&\ge \sum_{t\in T} \Big(p_a^\pi(t) - p_d^\pi(t) \Big) 
=\sum_{t\in T} p_a^\pi(t) - \sum_{t\in T}  p_d^\pi(t)
\ge 2\epsilon.
\end{align*}
Hence either $\norm{p^\pi_a - p^\pi_b}_1$ or $\norm{p^\pi_a - p^\pi_c}_1$ is bounded below by $\epsilon$.
\end{proof}

\subsection{The distributional error}  \label{sec:large-err-distri}
A simple adaptation of the proof for Theorem \ref{thm:large-err-pt-ub} implies the following.

\begin{corollary}  \label{thm:large-err-distri-ub}
For every $f: \cX \times \cY \to \{0,1\}$, every distribution $\mu \in \Delta(\cX \times \cY)$, and every $0\le \epsilon \le 1/2$, we have $\IC_\mu^\ext(f,\mu,1/2-\epsilon) \le 2\epsilon \IC_\mu^\ext(f,\mu,0)$. Similarly, $\IC_\mu(f,\mu,1/2-\epsilon) \le 2\epsilon \IC_\mu(f,\mu,0)$
\end{corollary}

It turns out that no general lower bound exists when distributional error is allowed. Consider the $\AND$ function and let $\mu$ be the uniform distribution. By Theorem \ref{thm:char-trivial-distri}, $\IC_\mu(\AND,\mu,0) > 0$. However, $\IC_\mu(\AND,\mu, 1/4) = 0$ via the following protocol $\pi$: simply output $0$ and terminate. Obviously, $\pi$ has zero information cost and it computes $[\AND,\mu,1/4]$. Note that $\IC_\mu(\AND,1/4) \ge \frac{1}{1024\ln 2} > 0$ by Theorem \ref{thm:large-err-pt-lb}.

\subsection{The prior-free information complexity}   \label{sec:pf-large-err}

\begin{corollary}  \label{cor:pf-large-err-pt}
For every non-constant $f: \cX \times \cY \to \{0,1\}$, and every $0\le \epsilon \le 1/2$, 
\[
\frac{1}{2\ln 2}\cdot \frac{1}{(|\cX|+|\cY|) \cdot |\cX \times \cY| \cdot \max\{|\cX|, |\cY|\}} \epsilon^2 
\le \IC(f, 1/2 - \epsilon) \le \IC^\ext(f, 1/2 - \epsilon) \le 2\epsilon \IC^\ext(f,0).
\]
\end{corollary}

\begin{proof}
The upper bound follows from Theorem \ref{thm:large-err-pt-ub}. The lower bound is obtained by applying uniform distribution with  Theorem \ref{thm:large-err-pt-lb}.
\end{proof}

As Corollary \ref{cor:pf-large-err-pt} holds for \emph{every} $0 \le \epsilon \le 1/2$, this gives bounds on information complexity with \emph{any} error.  Of course, the upper and lower bounds obtained in this way are weak. One often can obtain better bounds. One such example is $0.5  \le \IC^\ext(\XOR,1/4) \le 1$, a much better lower bound from Theorem \ref{thm:XOR-ext-err}.

\section{Trading  external information complexity for a small  error}  \label{sec:ext-small-err}
Comparing to computing a function $f$ without error, how much external information cost one can save by allowing a small (point-wise or distributional) error $\epsilon>0$?

\subsection{The point-wise error}  \label{sec:ext-pt-err}
We will study the behaviour of  $\IC_\mu^\ext(f,0) - \IC_\mu^\ext(f,\epsilon)$ with respect to $\epsilon$ when $\epsilon > 0$ is small. 

Consider the upper bound first. Note that $\IC_\mu^\ext(f,0)  - \IC_\mu^\ext(f,\epsilon) \le w$ is equivalent to $\IC_\mu^\ext(f,0) \le \IC_\mu^\ext(f,\epsilon) + w$, i.e., it gives an upper bound for $\IC_\mu^\ext(f,0)$ in terms of  $\IC_\mu^\ext(f,\epsilon)$. This suggests the following: given a protocol that computes $[f,\epsilon]$, can we modify  it  such that it computes $[f,0]$? This idea is called the ``protocol completion'' method, and has been used in \cite{exactComm}. In \cite{my1}, a conceptually simpler ``protocol completion'' has been introduced to study the internal information complexity. It turns out that the proof in \cite{my1} for the internal information complexity also works for the external information complexity. 

\begin{theorem}  \label{thm:ext-ub-small-pt-err}
For every $f: \cX \times \cY \to \cZ$, every distribution $\mu$, and $0\le \epsilon \le 1/4$, we have
\[
\IC_\mu^\ext(f,0)  - \IC_\mu^\ext(f,\epsilon) \le 4|\cX\times \cY|h(\sqrt{\epsilon}).
\]
\end{theorem}

See Figure \ref{fig:protocol-completion} for the protocol completion algorithm for Theorem \ref{thm:ext-ub-small-pt-err}: the protocol $\pi$ is assumed to compute $[f,\epsilon]$, and the protocol $\pi'$, obtained by protocol completion from $\pi$, computes $[f,0]$. The notation $\mu_\ell$ means the same as in Lemma \ref{lem:IC-over-leaves}. For the proof, see \cite[Theorem 3.5]{my1} (one only needs to make small modification of the proof to adapt external information complexity).

\begin{figure}[ht!]
\begin{framed}
On input $(X,Y)$:
\begin{itemize}
\item[(1)] Alice and Bob run the protocol $\pi$ with input $(X,Y)$ and reach a leaf $\ell$, let $z_\ell$ denote the output at leaf $\ell$; 
\item[(2)] Let $\Omega_\ell = \{(x,y): f(x,y)\neq z_\ell\}$. Alice and Bob verify whether $(X,Y) \in \Omega_\ell$ as follows: for every $(x,y) \in \Omega_\ell$ they verify whether $(X,Y) = (x,y)$ as follows,  
\begin{itemize}
\item If $\mu_\ell(x) \le \mu_\ell(y)$,  Alice reveals whether $X = x$ to Bob, and if yes, Bob reveals whether $Y=y$ to Alice.

\item If $\mu_\ell(x) > \mu_\ell(y)$,  Bob initiates the verification process.
\end{itemize}

If $(X,Y)=(x,y)$, they output $f(x,y)$ and terminate. 

Otherwise, they go to the next element in $\Omega_\ell$ and repeat. 

If they find out $(X,Y) \not\in \Omega_\ell$, they output $z_\ell$ and terminate.

\end{itemize}
\end{framed}
\caption{The protocol $\pi'$ via protocol completion from $\pi$, from \cite{my1}.}   \label{fig:protocol-completion}
\end{figure}

Next we consider the lower bound. One can easily save  $\Omega(\epsilon)$ external information cost. 

\begin{theorem}   \label{thm:ext-trivial-lb-small-pt-err}
For every $f: \cX \times \cY \to \cZ$ and distribution $\mu$ such that $\IC_\mu^\ext(f,0) > 0$, for every $0\le \epsilon \le 1/2$, we have 
\[
\IC_\mu^\ext(f,0) - \IC_\mu^\ext(f,\epsilon) \ge  \epsilon \IC_\mu^\ext(f,0).
\]
If the function $f$ is Boolean-valued: $f: \cX \times \cY \to \{0,1\}$,  then,
\[
\IC_\mu^\ext(f,0) - \IC_\mu^\ext(f,\epsilon) \ge  2\epsilon \IC_\mu^\ext(f,0).
\]
\end{theorem}

\begin{proof}
Given an arbitrary $\delta > 0$, by definition of $\IC_\mu^\ext(f,0)$,  there exists a protocol $\pi$ that computes $[f,0]$ and has $\IC_\mu^\ext(\pi) \le \IC_\mu^\ext(f,0) + \delta$. Now consider a protocol $\pi'$ that with probability $1-\epsilon$ runs $\pi$, and it outputs randomly otherwise. Obviously $\pi'$ computes $[f,\epsilon]$, and has external information cost 
\[
\IC_\mu^\ext(\pi') = (1-\epsilon)\IC_\mu^\ext(\pi)
\le (1-\epsilon)\IC_\mu^\ext(f,0) + (1-\epsilon)\delta.
\]
Since $\delta$ is arbitrary, one must have $\IC_\mu^\ext(f,\epsilon) \le (1-\epsilon)\IC_\mu^\ext(f,0)$.

When $\cZ = \{0,1\}$, consider a similar protocol $\pi'$ that with probability $1-2\epsilon$ it runs $\pi$, and otherwise outputs $0$ or $1$ uniformly at random. Then the error that $\pi'$ makes is at most  $2\epsilon \cdot 1/2= \epsilon$. Hence $\pi'$ computes $[f,\epsilon]$. The rest of the proof is identical.
\end{proof}

In general, the linear dependency on $\epsilon$ can not be improved. An example has been provided in \cite[Proposition 3.4]{my1}. We will give a simpler proof for that example in Section \ref{sec:XOR} with a better bound. In fact, our proof gives an exact \emph{equality}, not just an upper bound. 

\subsubsection{Product distributions}  \label{sec:prod}

By Lemma \ref{lem:IC-relation}, when $\mu$ is a product distribution the external and internal information complexity are the same. In \cite[Theorem 3.2]{my1}, an order $h(\epsilon)$ lower bound has been shown for internal information complexity (for all internal non-trivial distributions). The proof, however, is relatively complicated. Following essentially the same idea as for \cite[Theorem 3.2]{my1}, we provide a much simpler proof for product distributions using Lemma \ref{lem:Div-corrupted-upperbound}.

\begin{theorem}   \label{thm:ext-prod-lb-small-pt-err}
Let $f\colon\cX \times \cY \to \cZ$ and  $\mu=\mu_1 \times \mu_2$ be a product distribution on $\cX \times \cY$, and suppose $\IC_\mu^\ext(f,0) > 0$. Then, for  every $0\le \epsilon \le 1/2$, 
\[ 
\IC_\mu^\ext(f,0) - \IC_\mu^\ext(f,\epsilon) \ge \frac{1 - \sqrt{1-\delta}}{4} h(\epsilon) + \frac{\epsilon}{2} \IC_\mu^\ext(f,0)
\]
where $\delta = \min_{z\in \cZ:\ f^{-1}(z)\cap \supp\mu \neq \emptyset} \mu(f^{-1}(z)\cap \supp\mu)$, and $0 < \delta < 1$.
\end{theorem}

The idea can be called as ``private cheating'':  one player privately cheats in the communication by randomly deciding not to use her/his real input. Consider Alice for instance. Instead of using her real input $x$ in the communication with Bob, she cheats in the communication by using some random input $x'$ and sends (probably erroneous) bits accordingly. If Alice  only cheats with small probability, say $\epsilon$, and otherwise still uses her real input, then the protocol will only make a small error. As we shall see,  this reduces the information cost by  $\Omega(h(\epsilon))$.

\begin{proof}
Since $\IC_\mu^\ext(f,0) > 0$ and $\mu$ is a product distribution, Corollary \ref{cor:trivial-distri-prod-distri} implies that $f$ on $\supp\mu$ is not a constant. Hence $0< \delta < 1$.

Let $\pi$ be a protocol that computes $[f,0]$. Define a protocol $\pi'$ as in Figure \ref{fig:protocol-cheat-prod-distri}.

\begin{figure}[h!]
\begin{framed}
On input $(X,Y)$, Alice and Bob flip an unbiased coin $B$.
\begin{itemize}
\item[(1)] If $B=0$, Alice privately with probability $1-\epsilon$ sets $X'=X$, and with probability $\epsilon$ samples $X'$ from $\cX$ according to $\mu_1$.
\item[(2)] If $B=1$, Bob privately with probability $1-\epsilon$ sets $Y'=Y$, and with probability $\epsilon$ samples $Y'$ from $\cY$ according to $\mu_2$.
\item[(3)] They run $\pi$ on $X'Y$ or $XY'$ depending on whether $B=0$ or $B=1$.
\end{itemize}
\end{framed}
\caption{The protocol $\pi'$ via private cheating from $\pi$.}   \label{fig:protocol-cheat-prod-distri}
\end{figure}

Obviously $\pi'$ computes $[f,\epsilon]$, i.e., it  computes $f(x,y)$ correctly with probability at least $1-\epsilon$ for every $xy \in \cX \times \cY$. Let $\pi_0'$ and $\pi_1'$ be the above protocol restricted to $B=0$ and $B=1$, respectively. Clearly
\begin{equation}  \label{eq:Rate-Prod-distribution-temp1}
\IC_\mu^\ext(\pi')=\frac{\IC_\mu^\ext(\pi'_0)+\IC_\mu^\ext(\pi'_1)}{2}.
\end{equation}

Let $\Pi$ denote the random transcript of $\pi$. Let us focus on $\IC_\mu^\ext(\pi'_0)$. Note that $X'Y$ has the same distribution as $XY$. Hence, as $\mu$ is a product distribution,
\[
\IC_\mu^\ext(\pi'_0) = I(\Pi_{X'Y};X) +  I(\Pi_{X'Y};Y)=I(\Pi_{X'Y};X) +  I(\Pi_{XY};Y).
\]
By the definition of $X'$, we have
\[  
\Pr[X'=a' |  X=a]
=
\begin{cases}
1 - \epsilon + \epsilon \mu_1(a), &\quad a' = a, \\
\epsilon \mu_1(a'), &\quad a' \neq a.
\end{cases}
\]
Since $X$ and $X'$ have the same distribution, by Bayes' rule, we have
\begin{align*}
\Pr[X=a |  X'=a']
&= \frac{\Pr[X'=a' |  X=a] \Pr[X=a]}{\Pr[X'=a']} = \frac{\mu_1(a)}{\mu_1(a')} \Pr[X'=a' |  X=a]\\
&=
\begin{cases}
1 - \epsilon + \epsilon \mu_1(a), &\quad a' = a, \\
\epsilon \mu_1(a), &\quad a' \neq a.
\end{cases}
\end{align*}
Now for $a \in \cX$ and a fixed transcript $t$,
\begin{align*}
\Pr[X=a| \Pi_{X'Y}=t]
&= \sum_{a' \in \cX } \Pr[X=a,X'=a'| \Pi_{X'Y}=t] \\
&=  \sum_{a' \in \cX } \Pr[X'=a'| \Pi_{X'Y}=t]  \Pr[X=a |  X'=a', \Pi_{X'Y}=t]  \\
&= \sum_{a' \in \cX } \Pr[X=a'| \Pi_{XY}=t]  \Pr[X=a |  X'=a'] \\
&= \Pr[X=a | \Pi_{XY}=t] (1 - \epsilon + \epsilon \mu_1(a)) \\
&{\phantom{=}} +  \sum_{a' \in \cX, a' \neq a} \Pr[X=a'| \Pi_{XY}=t] \epsilon \mu_1(a)  \\
&= (1-\epsilon) \Pr[X=a | \Pi_{XY}=t]
  + \epsilon \mu_1 (a) \sum_{a' \in \cX}  \Pr[X=a'| \Pi_{XY}=t] \\
&= (1-\epsilon) \Pr[X=a | \Pi_{XY}=t] + \epsilon \mu_1 (a).
\end{align*}
Denote the distribution of $X|_{\Pi_{XY} = t}$ by $\mu_{t,X}$, and $X|_{\Pi_{X'Y} = t}$ by $\mu'_{t,X}$, then the above formula says
$$
\mu'_{t,X} = (1-\epsilon) \mu_{t, X} + \epsilon \mu_1.
$$
Since the distribution of $X'Y$ is the same as $XY$, by Lemma \ref{lem:Div-corrupted-upperbound}, 
\begin{align*}
I(\Pi_{X'Y}; X)
&= \Ex_{t \sim \Pi_{X'Y}} D(X|_{\Pi_{X'Y} = t} \|X) \\
&= \Ex_{t \sim \Pi_{XY}} D(\mu'_{t,X}\|\mu_1) = \Ex_{t \sim \Pi_{XY}} D((1-\epsilon) \mu_{t, X} + \epsilon \mu_1\|\mu_1)  \\
&\le
\Ex_{t \sim \Pi_{XY}} \big( (1-\epsilon) D(\mu_{t, X} \| \mu_1) - (1-\mu_1(\supp\mu_{t, X})) \epsilon\log\frac{1}{\epsilon} \big) \\
&= (1-\epsilon)  \Ex_{t \sim \Pi_{XY}} D(X|_{\Pi_{XY} = t} \|X) - \Ex_{t \sim \Pi_{XY}} (1-\mu_1(\supp\mu_{t, X})) \epsilon\log\frac{1}{\epsilon}   \\
&= (1-\epsilon)I(\Pi_{XY}; X) - \Ex_{t \sim \Pi_{XY}} (1-\mu_1(\supp\mu_{t, X}))\epsilon\log\frac{1}{\epsilon}.
\end{align*}
Hence, 
\begin{equation} \label{eq:Rate-Prod-distribution-temp2}
\IC_\mu^\ext(\pi'_0)
\le
I(\Pi_{XY};Y) + (1-\epsilon)I(\Pi_{XY}; X) - \left( \epsilon\log\frac{1}{\epsilon} \right) \Ex_{t \sim \Pi_{XY}} (1-\mu_1(\supp\mu_{t, X})).
\end{equation}
Similarly, one has
\begin{equation}\label{eq:Rate-Prod-distribution-temp3}
\IC_\mu^\ext(\pi'_1)
\le
I(\Pi_{XY};X) + (1-\epsilon)I(\Pi_{XY}; Y) - \left( \epsilon\log\frac{1}{\epsilon} \right) \Ex_{t \sim \Pi_{XY}} (1-\mu_2(\supp\mu_{t, Y})),
\end{equation}
where $\mu_{t, Y}$ denotes the distribution of $Y|_{\Pi_{XY}=t}$. 

Let $\mu_t$ be the distribution of $XY|_{\Pi_{XY}=t}$. If $\mu$ is a product distribution, then $\mu_t$ is also a product distribution, and $\supp\mu_t \subseteq \supp\mu$ (see, e.g., \cite[Section 2.5]{my1}). Hence, $\mu_t = \mu_{t,X} \times \mu_{t, Y}$. Let $z_t$ denote the output of transcript $t$. Note that for every $(x,y) \in \supp\mu_t$, the transcript $t$ outputs $z_t$ for this $(x,y)$. Since protocol $\pi$ computes $f$ correctly on every input,  one must have $\supp\mu_t \subseteq f^{-1}(z_t)$. Hence,  $\supp\mu_t \subseteq f^{-1}(z_t) \cap \supp\mu$. By the definition of $\delta$ and the fact that $f$ on $\supp\mu$ is \emph{not} a constant, one has $\mu(\supp\mu_t) \le 1 - \delta$.
As both $\mu$ and $\mu_t$ are product distributions, we have $
\mu(\supp\mu_t) = \mu(\supp\mu_{t,X} \times \supp\mu_{t,Y}) = \mu_1(\supp\mu_{t,X}) \times \mu_2(\supp\mu_{t,Y})$.
Therefore, $\mu_1(\supp\mu_{t,X}) \times \mu_2(\supp\mu_{t,Y}) \le 1 - \delta$.
Hence $\min\{\mu_1(\supp\mu_{t,X}), \mu_2(\supp\mu_{t,Y})\} \le \sqrt{1-\delta}$, implying that
\begin{equation} \label{eq:Rate-Prod-distribution-temp4}
(1 - \mu_1(\supp\mu_{t,X})) + (1 - \mu_2(\supp\mu_{t,Y})) \ge 1 - \sqrt{1-\delta}.
\end{equation}
By \eqref{eq:Rate-Prod-distribution-temp1}, \eqref{eq:Rate-Prod-distribution-temp2}, \eqref{eq:Rate-Prod-distribution-temp3} and \eqref{eq:Rate-Prod-distribution-temp4}, we get the desired bound by applying the fact that $\epsilon \log (1/\epsilon) \ge h(\epsilon)/2$ for all $0 \le \epsilon \le 1/2$.
\end{proof}

\subsection{The distributional error}   \label{sec:ext-distri-err}

Similar bounds can be obtained for  $\IC_\mu^\ext(f,\mu, 0) - \IC_\mu^\ext(f,\mu, \epsilon)$, in the same way as in Section \ref{sec:ext-pt-err}.

\begin{theorem}   \label{thm:ext-small-distri-err}
For every $f: \cX \times \cY \to \cZ$, every distribution $\mu$ such that $\IC^\ext_\mu(f,\mu, 0) > 0$, we have the following,
\begin{enumerate}[(1)]
\item For every $0\le \epsilon \le \alpha/4$,  $\IC_\mu^\ext(f,\mu, 0)  - \IC_\mu^\ext(f,\mu, \epsilon) \le 4|\cX \times \cY|h(\sqrt{\epsilon/\alpha})$ where $\alpha = \min_{(x,y) \in \supp\mu} \mu(x,y) > 0$.

\item For every $0 \le \epsilon \le 1/2$, $\IC_\mu^\ext(f,\mu, 0)  - \IC_\mu^\ext(f,\mu, \epsilon) \ge  \epsilon \IC_\mu^\ext(f,\mu, 0) = \Omega(\epsilon)$. 
If the function $f$ is Boolean-valued: $f: \cX \times \cY \to \{0,1\}$,  then $\IC_\mu^\ext(f,\mu, 0)  - \IC_\mu^\ext(f,\mu, \epsilon) \ge  2\epsilon \IC_\mu^\ext(f,\mu, 0)$.
\end{enumerate}
\end{theorem}

\begin{proof}
The proof for the upper bound is similar to the proof of Theorem~\ref{thm:ext-ub-small-pt-err}.  Consider a protocol $\pi$  that computes $[f,\mu, \epsilon]$.  The new protocol  $\pi'$ that computes $[f, \mu, 0]$ is defined similarly as in Figure \ref{fig:protocol-completion}, the difference is that the verification is only performed on the support of $\mu$, i.e., $\Omega'_\ell = \{(x,y) : f(x,y)\neq z_\ell\} \cap \supp\mu$.
Note that $\pi$ has point-wise error at most $\epsilon/\alpha$ on every input in $\supp\mu$. Thus the same analysis in the proof of Theorem~\ref{thm:ext-ub-small-pt-err} gives the upper bound. 
The proof for the lower bound is the same as for Theorem \ref{thm:ext-trivial-lb-small-pt-err}.
\end{proof}

\subsection{The prior-free external information complexity}   \label{sec:ext-pf}

By Corollary \ref{cor:trivial-function}, $\IC^\ext(f,0) > 0$, or equivalently $\IC^{D,\ext}(f,0) > 0$, if and only if $f$ is not a constant function. The following results are direct consequences of results in Section \ref{sec:ext-pt-err} and Section \ref{sec:ext-distri-err}. The explicit lower bounds are obtained by applying Corollary \ref{cor:explicit-priof-free-IC-non-zero}.

\begin{corollary}  \label{cor:pf-ext-pt-err}
Let $f: \cX \times \cY \to \cZ$ be a non-constant function. Then, 
\begin{enumerate}[(1)]
\item  For every $0\le \epsilon \le 1/4$, $\IC^\ext(f,0)  - \IC^\ext(f,\epsilon) \le 4|\cX \times \cY|h(\sqrt{\epsilon})$.

\item For every $0\le \epsilon \le 1/2$, $\IC^\ext(f,0) - \IC^\ext(f,\epsilon) \ge  \epsilon \IC^\ext(f,0) \ge \frac{\epsilon}{(|\cX|+|\cY|) \ln 16}$. 
If the function $f$ is Boolean-valued: $f: \cX \times \cY \to \{0,1\}$,  then, $\IC^\ext(f,0) - \IC^\ext(f,\epsilon) \ge  2\epsilon \IC^\ext(f,0)\ge \frac{\epsilon}{(|\cX|+|\cY|) \ln 4}$.
\end{enumerate}
\end{corollary}

\begin{corollary}   \label{cor:pf-ext-distri-err}
Let $f: \cX \times \cY \to \cZ$ be a non-constant function. Then, 
\begin{enumerate}[(1)]
\item There exists a constant $\alpha > 0$ that depends only on $f$, such that for every $0\le \epsilon \le \alpha/4$, 
$\IC^{D,\ext}(f,0)  - \IC^{D,\ext}(f, \epsilon) \le 4|\cX \times \cY|h(\sqrt{\epsilon/\alpha})$.

\item For every $0\le \epsilon \le 1/2$, $\IC^{D,\ext}(f,0) - \IC^{D,\ext}(f,\epsilon) \ge  \epsilon \IC^{D,\ext}(f,0) \ge \frac{\epsilon}{(|\cX|+|\cY|) \ln 16}$.
If the function $f$ is Boolean-valued: $f: \cX \times \cY \to \{0,1\}$,  then, $\IC^{D,\ext}(f,0) - \IC^{D,\ext}(f,\epsilon) \ge  2\epsilon \IC^{D,\ext}(f,0) \ge \frac{\epsilon}{(|\cX|+|\cY|) \ln 4}$.
\end{enumerate}
\end{corollary}

\section{Tight examples}   \label{sec:eg}

We say a result is tight if the dependency on the parameter $\epsilon$ cannot be improved with respect to the order of $\epsilon$. We will see that Theorem \ref{thm:large-err-pt-ub}, Theorem \ref{thm:large-err-pt-lb}, and Theorem \ref{thm:ext-trivial-lb-small-pt-err} are all tight.

Let $\DISJ_n: \{0,1\}^n \times \{0,1\}^n \to \{0,1\}$ be the two-party $n$-bit set disjointness function: $\DISJ_n(x,y)=1$ if and only if $x$ and $y$, when viewed as subsets of $\{1,2,\ldots,n\}$, are disjoint. In \cite{twobounds}, it is shown that for $n = 3k$, there exists a distribution $\mu$ on the input space $\{0,1\}^n \times \{0,1\}^n$, such that for sufficiently small $\epsilon > 0$, $\IC_\mu(\DISJ_n, 1/2-\epsilon) = \Omega(\epsilon n) = \Omega(\epsilon \cdot \IC_\mu(\DISJ_n,0))$. This shows that Theorem \ref{thm:large-err-pt-ub} is  tight. The distribution is constructed as follows: let $\mu_0$ be the uniform distribution on the following six pairs: $(100,010), (100,001), (010,100), (010,001), (001,100), (001,010)$. That is, every pair consists of two disjoint subsets each of size one. Then $\mu$ is the product distribution $\mu = \mu_0 \times \mu_0 \times \cdots \mu_0$ ($k$-times).

Next we will see that Theorem \ref{thm:ext-trivial-lb-small-pt-err} and Theorem \ref{thm:large-err-pt-lb} are also tight.

\subsection{The $\XOR$ example}   \label{sec:XOR}

\begin{theorem} \label{thm:XOR-ext-err}
Let 
$
 \mu = 
\begin{pmatrix}
1/2 & 0 \\
0 & 1/2
\end{pmatrix}$ 
be a distribution on $\{0,1\} \times \{0,1\}$. 
Then, for every $0 \le \epsilon \le 1/2$, 
\begin{equation}   \label{eq:XOR-eps}
\IC_\mu^\ext(\XOR,\epsilon)  = 1 - 2\epsilon.
\end{equation}
\end{theorem}

Theorem \ref{thm:XOR-ext-err} implies that Theorem \ref{thm:ext-trivial-lb-small-pt-err}  is tight, it also implies Theorem \ref{thm:large-err-pt-ub} is tight for the external information complexity. This example has been analyzed in \cite[Proposition 3.4]{my1} where $\IC_\mu^\ext(\XOR,\epsilon)  \ge 1 - 3\epsilon$ was shown. We improve it to the \emph{equality} \eqref{eq:XOR-eps}.  This is interesting since, to the knowledge of the author, \eqref{eq:XOR-eps} is the first non-trivial example where we know the \emph{exact} external information complexity \emph{with a non-zero error} for an explicit function. For internal information complexity, it seems we do not know such an example. The proof in \cite[Proposition 3.4]{my1} uses a result from real analysis. We replace that by a simple inequality Lemma \ref{lem:entropy-small}. For reader's convenience, we provide a complete proof. 

\begin{proof}
By Theorem \ref{thm:char-trivial-distri}, $\IC_\mu^\ext(\XOR,0) > 0$. Obviously, $\IC_\mu^\ext(\XOR,0) \le h(1/2) = 1$. We claim that it suffices to show
\begin{equation}  \label{eq:xor-goal}
\IC_\mu^\ext(\XOR,\epsilon) \ge 1-2\epsilon. 
\end{equation}
Indeed, firstly this implies $\IC_\mu^\ext(\XOR,0) =1$. Furthermore, it also implies $\IC_\mu^\ext(\XOR,\epsilon) \le 1-2\epsilon$, since by Theorem \ref{thm:ext-trivial-lb-small-pt-err}, $1 - \IC^\ext_\mu(\XOR,\epsilon) = \IC^\ext_\mu(\XOR,0) - \IC^\ext_\mu(\XOR,\epsilon) \ge 2 \epsilon \IC^\ext_\mu(\XOR,0)  = 2\epsilon$. 

Let $\pi$ be any protocol that computes $[\XOR, \epsilon]$ and $\Pi$ be its random transcript. Let $\ell$ be a leaf of $\pi$ and $\mu_\ell$ denote the distribution conditioned on the protocol reaches $\ell$. By Lemma \ref{lem:IC-over-leaves}, 
\[
\IC^\ext_\mu(\pi) 
= I(XY;\Pi)
= \ent_\mu(XY) - \sum_{\ell} \Pr[\ell] \ent_{\mu_\ell}(XY)
= 1 - \sum_{\ell} \Pr[\ell] \ent_{\mu_\ell}(XY).
\]
We know that $\mu_\ell$ has the same form as $\mu$ (see, e.g., \cite[Section 2.5, Section 5]{my1}), that is,
\begin{equation}  \label{eq:def-mu-ll}
\mu_\ell
=
\begin{pmatrix}
p(\ell) & 0 \\
0 & 1- p(\ell)
\end{pmatrix}
\end{equation}
for some $p(\ell) \in [0,1]$. Hence, $\ent_{\mu_\ell}(XY) = h(p(\ell))$. Thus, our goal is to upper bound $\sum_{\ell} \Pr[\ell] h(p(\ell))$.

As usual, let $p^\pi_{xy}(\ell) = \Pr[\Pi=\ell|XY=xy]$. By Bayes' rule, 
\[
p^\pi_{xy}(\ell)  = \frac{\mu_\ell(xy)}{\mu(xy)} \Pr[\ell] = 2 \mu_\ell(xy) \Pr[\ell]
\]
if $xy=00$ or $xy=11$.  Hence, 
\begin{equation}  \label{eq:apply-rect}
2 \sqrt{\mu_\ell(00) \mu_\ell(11)} \Pr[\ell]
= \sqrt{p^\pi_{00}(\ell)  p^\pi_{11}(\ell)}
= \sqrt{p^\pi_{01}(\ell)  p^\pi_{10}(\ell)}
\le \frac{p^\pi_{01}(\ell) + p^\pi_{10}(\ell)}{2} 
\end{equation}
where we used the  rectangle property Lemma \ref{lem:rectangle-property} in the second equality. Let $L_0$ and $L_1$ denote the set of transcripts with output $0$ and $1$, respectively. Since $\XOR(00) = \XOR(11) = 0$,  $\pi$ computes $[\XOR,\epsilon]$ implies $\Pr[\Pi_{00} \in L_1] \le \epsilon$ and $\Pr[\Pi_{11} \in L_1] \le \epsilon$. Hence, 
\begin{align}  \label{eq:L1}
\sum_{\ell \in L_1} \Pr[\ell] 
&= \sum_{\ell \in L_1} \Big( \mu(00) \Pr[\Pi_{00} = \ell] + \mu(11) \Pr[\Pi_{11} = \ell] \Big) \\
&= \mu(00) \Pr[\Pi_{00} \in L_1] + \mu(11) \Pr[\Pi_{11} \in L_1]
\le \epsilon. \notag 
\end{align}
 Similarly, since  $\XOR(01) = \XOR(10)=1$, one has 
\begin{equation}  \label{eq:aaa}
\sum_{\ell \in L_0} p^\pi_{01}(\ell) = \Pr[\Pi_{01} \in L_0] \le \epsilon, 
\quad
\sum_{\ell \in L_0} p^\pi_{10}(\ell)  = \Pr[\Pi_{10} \in L_0] \le \epsilon. 
\end{equation}
Note that by our notation in \eqref{eq:def-mu-ll}, $p(\ell) = \mu_\ell(00)$ and $1-p(\ell) = \mu_\ell(11)$. Hence, by \eqref{eq:apply-rect} and \eqref{eq:aaa},
\begin{equation}   \label{eq:L0}
\sum_{\ell \in L_0}   2\sqrt{p(\ell) (1-p(\ell))} \Pr[\ell]  
= \sum_{\ell \in L_0}  2\sqrt{\mu_\ell(00) \mu_\ell(11)} \Pr[\ell] 
\le \frac{\sum_{\ell \in L_0} p^\pi_{01}(\ell) + \sum_{\ell \in L_0} p^\pi_{10}(\ell)}{2}
\le \epsilon.
\end{equation}

Therefore, by \eqref{eq:L1}, \eqref{eq:L0}, and Lemma \ref{lem:entropy-small}, we obtain,
\begin{align*}
\sum_{\ell} \Pr[\ell] h(p(\ell)) 
&= \sum_{\ell \in L_0}  \Pr[\ell] h(p(\ell))  + \sum_{\ell \in L_1}  \Pr[\ell] h(p(\ell))   \\
&\le \sum_{\ell \in L_0}  2 \Pr[\ell] \sqrt{p(\ell) (1-p(\ell))} + \sum_{\ell \in L_1}  \Pr[\ell] 
\le 2\epsilon.
\end{align*}
Since $\pi$ is arbitrary, this proves \eqref{eq:xor-goal}.
\end{proof}

It is well-known that $\IC^\ext(\XOR,0) = 2$.  Theorem \ref{thm:XOR-ext-err} and Corollary \ref{cor:pf-ext-pt-err} together imply the following corollary.

\begin{corollary}  \label{cor:XOR-ext-large-err}
For every $0 \le \epsilon \le 1/2$, $1-2 \epsilon \le \IC^\ext(\XOR,\epsilon)  \le 2(1-2\epsilon)$.
Equivalently, $2 \epsilon \le \IC^\ext(\XOR,1/2-\epsilon)  \le 4\epsilon$. 
\end{corollary}

\subsection{The $\AND$ example}  \label{sec:AND}

In \cite{twobounds}, a protocol (see Figure \ref{protocol:AND}) for $\AND$ is proposed that computes $[\AND,1/2-\epsilon]$ and has external information cost $O(\epsilon^2)$ for every distribution. By Lemma \ref{lem:IC-relation}, $\IC(\AND,1/2-\epsilon) \le \IC^\ext(\AND,1/2-\epsilon) = O(\epsilon^2)$, showing that Theorem \ref{thm:large-err-pt-lb} is tight. Since explicit bounds for the information complexity of $\AND$ can be useful (e.g., see \cite{exactComm}), below we obtain such bounds.

\begin{figure}[ht!] 

\begin{framed}
On input $XY$:
\begin{itemize}
\item With probability $2\epsilon$ simply output $0$;
\item With probability $1-2\epsilon$, do the following:
\begin{itemize}
\item Alice sends to Bob $\widetilde{X}$ defined as $\widetilde{X} = X$ with probability $1/2 + 4\epsilon$, and $\widetilde{X} = 1-X$ otherwise;
\item Similarly, Bobs sends to Alice  $\widetilde{Y}$;
\item If both Alice and Bob sends $1$, output $1$;
\item If both Alice and Bob sends $0$, output $0$;
\item Otherwise, output $0$ or $1$ with equal probability $1/2$.
\end{itemize}
\end{itemize}
\end{framed}
\caption{The protocol $\pi$ that computes $[\AND, 1/2-\epsilon]$, from \cite{twobounds}. \label{protocol:AND}}
\end{figure}

Firstly, we obtain a lower bound from Theorem \ref{thm:large-err-pt-lb}.

\begin{corollary}  \label{cor:AND-large-error-lb}
For every distribution $\mu$ such that $\IC_\mu(\AND,0) > 0$, let $\delta>0$ be defined as in Theorem \ref{thm:general-lb-int-cost}. For every $0 \le \epsilon \le 1/2$, 
\[
\IC_\mu(\AND, 1/2-\epsilon) \ge \frac{\delta}{8\ln 2} \epsilon^2.
\]
\end{corollary}

\begin{proof}
By Theorem \ref{thm:char-trivial-distri}, a distribution $\mu$ is internal trivial for $\AND$ if and only if $f_{C_A \times C_B}$ is constant for every connected component $C$ of $\mu$. Since the input space for $\AND$ is $\{0,1\} \times \{0,1\}$, $\supp\mu$ has at most two connected components. It is then easy to verify that internal trivial distributions for $\AND$ are of the following forms:
\begin{equation}  \label{eq:AND-trivial-distri}
\begin{pmatrix}
* & * \\
0 & 0
\end{pmatrix},
\quad
\begin{pmatrix}
* & 0 \\
* & 0
\end{pmatrix},
\quad
\begin{pmatrix}
0 & * \\
* & 0
\end{pmatrix},
\quad
\begin{pmatrix}
* & 0 \\
0 & *
\end{pmatrix}.
\end{equation}
As a result, $\mu$ is not internal trivial for $\AND$ if and only if  $\mu$ is in one of the following forms:
\begin{equation}   \label{eq:AND-non-trivial-distri}
\begin{pmatrix}
0 & 0 \\
+ & +
\end{pmatrix},
\quad
\begin{pmatrix}
0 & + \\
0 & +
\end{pmatrix},
\quad
\begin{pmatrix}
0 & + \\
+ & +
\end{pmatrix},
\quad
\begin{pmatrix}
+ & 0 \\
+ & +
\end{pmatrix},
\quad
\begin{pmatrix}
+ & + \\
0 & +
\end{pmatrix},
\quad
\begin{pmatrix}
+ & + \\
+ & 0
\end{pmatrix},
\quad
\begin{pmatrix}
+ & + \\
+ & +
\end{pmatrix},
\end{equation}
where the ``$+$'' sign indicates the corresponding entry is strictly positive. One can directly verify that these distributions satisfy either Condition (1) or Condition (2) in Theorem \ref{thm:large-err-pt-lb}. 
\end{proof}

For the upper bound, with Wolfram Mathematica we explicitly compute the information cost of the protocol $\pi$ in Figure \ref{protocol:AND}. Let $\mu = 
\begin{pmatrix}
\alpha & \beta \\
\gamma & \delta
\end{pmatrix}
$ be the input distribution. The Wolfram Mathematica computation\footnote{Available upon request.} shows
\begin{equation}  \label{eq:AndResultExt}
\IC_\mu^\ext(\AND, 1/2-\epsilon) \le \IC_\mu^\ext (\pi) = \frac{128(2\alpha\delta + \beta(1-\beta) + \gamma (1-\gamma))}{\ln 2}  \epsilon^2 + O(\epsilon^4).
\end{equation}
and
\begin{equation}   \label{eq:AndResultInt}
\IC_\mu(\AND, 1/2-\epsilon) \le \IC_\mu(\pi)
= \frac{ 128\Big(2\alpha\delta + \beta(1-\beta) + \gamma (1-\gamma)\Big)  \Big((\alpha+\delta)\beta\gamma + (\beta+\gamma)\alpha\delta \Big) }{(\alpha+\beta)(\gamma+\delta)(\alpha+\gamma)(\beta+\delta)\ln 2}\epsilon^2 + O(\epsilon^4).
\end{equation}

Combine this with the lower bound from Corollary \ref{cor:AND-large-error-lb}, we obtain the following. 

\begin{corollary}  \label{cor:AND-large-err-pf}
For every $0 \le \epsilon \le 1/2$ the following hold.
 
For every distribution $\mu$ such that $\IC_\mu(\AND,0) > 0$,
$\IC_\mu(\AND, 1/2-\epsilon) = \Theta(\epsilon^2)$.

For every distribution $\mu$ such that $\IC_\mu^\ext(\AND,0) > 0$, $\IC_\mu^\ext(\AND, 1/2-\epsilon) = \Theta(\epsilon^2)$.

As a result, $\IC(\AND, 1/2-\epsilon) = \Theta(\epsilon^2)$ and 
$\IC^\ext(\AND, 1/2-\epsilon) = \Theta(\epsilon^2)$.
\end{corollary}

This contrasts to $\XOR$ since Corollary \ref{cor:XOR-ext-large-err} shows $\IC^\ext(\XOR, 1/2-\epsilon) = \Theta(\epsilon)$ for every $0\le \epsilon \le 1/2$. 

\section{Discussion and open problems}   \label{sec:prob}
Since $h(\epsilon) - h(0) = h(\epsilon)$ and $h(1/2) - h(1/2-\epsilon) = \Theta(\epsilon^2)$ when $\epsilon>0$ is sufficiently small, Theorem \ref{thm:ext-prod-lb-small-pt-err} and Theorem \ref{thm:large-err-pt-lb} show that the behaviour of information complexity when error is allowed has some similarities to the behaviour of Shannon entropy function. However, the tight examples in Section \ref{sec:eg} show that the behaviour of information complexity is more complicated. Solving the following problems will shed more light on trading information complexity for error. 

{\noindent \bf Problem 1}: Is  the upper bound $\IC^\ext_\mu(f,0) -\IC_\mu^\ext(f,\epsilon) = O(h(\sqrt{\epsilon}))$ tight (in terms of the order of $\epsilon$)?

{\noindent \bf Problem 2}: In viewing of Theorem \ref{thm:ext-prod-lb-small-pt-err}, maybe the lower bound for prior-free external information complexity could be improved. For example, is $\IC^\ext(f,0) - \IC^\ext(f,\epsilon) \ge \Omega(h(\epsilon))$ true?

{\noindent \bf Problem 3}: Remove the conditions in Theorem \ref{thm:large-err-pt-lb}. One example that is not included in Theorem \ref{thm:large-err-pt-lb} is
\begin{equation}
\begin{pmatrix}
0' & 0' & 0 \\
0 & 0' & 0' \\
1 & 0 & 0' \\
\end{pmatrix},
\end{equation}
where $\supp\mu$ consists of $0'$s. Perhaps the same lower bound holds.

{\noindent \bf Problem 4}: We showed in Corollary \ref{cor:XOR-ext-large-err} that $2 \epsilon \le \IC^\ext(\XOR,1/2-\epsilon)  \le 4\epsilon$.  It seems an elegant problem to determine $\IC^\ext(\XOR,\epsilon)$ \emph{exactly} for every $0\le \epsilon \le 1/2$.

Another natural direction is to generalize \cite{my1} and the present work to multi-party information complexity, and to quantum information complexity defined in \cite{qic}. Lastly, considering the wide applicability of information complexity (such as to communication complexity, data stream, decision tree complexity, extension complexity, etc, as mentioned in the introduction), it would be great to see new applications based on the techniques and results developed in \cite{my1} and the present work.

\bibliography{mybib}{}
\bibliographystyle{amsalpha}

\end{document}